%
\newif\ifels
\elsfalse
\ifels
	\documentclass[1p, sort&compress]{elsarticle}
	\newproof{proof}{Proof}
	\newtheorem{theorem}{Theorem}
	\newtheorem{proposition}{Proposition}
	\journal{Applied and Computational Harmonic Analysis}
	\usepackage{lineno}
	\usepackage{color}
\else
	\documentclass[10pt, onecolumn, twoside]{article}

\setlength\textwidth{155mm} \setlength{\oddsidemargin}{0mm} \setlength{\evensidemargin}{0mm}
\usepackage{titlesec, color, amsthm}
\definecolor{blue1}{rgb}{0,0,0}
\renewcommand\thesection{\arabic{section}}

\titleformat{\section}[hang]{\color{blue1}\large\bfseries\sffamily}{\thesection}{0mm}{. }[]
\titleformat{\subsection}[hang] {\color{blue1}\bfseries\sffamily}{\thesubsection}{0em}{. }[]
\titleformat{\subsubsection}[hang] {\color{blue1}\sffamily}{\thesubsubsection}{0em}{. }[]
\titlespacing*{\section}{1em}{3.5ex plus .2ex minus .2ex}{1ex plus .2ex}
\titlespacing*{\subsection}{0em}{3ex plus .2ex minus .2ex}{1ex plus .2ex}
\titlespacing*{\subsubsection}{0em}{3ex plus .2ex minus .2ex}{1ex plus .2ex}

\renewenvironment{abstract}{{\color{blue1}\small\bfseries Abstract.}\footnotesize}{\par \vskip .1in}
\makeatletter
\def\@setauthors{
\begingroup 
\def \thanks{\protect\thanks@warning}
\trivlist \centering\footnotesize \@topsep30\p@\relax \advance\@topsep by -\baselineskip
\item\relax \author@andify \authors \def\\{\protect\linebreak} {\color{blue1}\large\authors} \endtrivlist \endgroup}
\def\@settitle{\centering{\color{blue1} \Large \bfseries \bfseries \@title \par}}
\makeatother
\usepackage[normal, small, up, textfont=it]{caption}
\setlength{\captionmargin}{0pt}
	\usepackage{cite} 
	\newtheorem{theorem}{Theorem}[section]
	\newtheorem{proposition}{Proposition}[section]
	\usepackage{lineno}
\fi
\usepackage{amsmath} \usepackage{algorithm, algpseudocode}
\usepackage{amssymb, bm, graphicx, hyperref, enumerate, rotating}
\usepackage{verbatim} 
\usepackage{subfigure}
\usepackage{epstopdf}























\renewcommand{\leq}{\ensuremath{\leqslant}}
\renewcommand{\geq}{\ensuremath{\geqslant}}

\newtheorem{definition}[theorem]{Definition}

\algnewcommand\algorithmicinput{\textbf{Input:}}
\algnewcommand\Input{\item[\algorithmicinput]}

\algnewcommand\algorithmicoutput{\textbf{Output:}}
\algnewcommand\Output{\item[\algorithmicoutput]}
\newcommand{\longtitle}{{Design of graph filters and filterbanks}}
\newcommand{\NTlong}{{Nicolas Tremblay}}
\newcommand{\PBlong}{{Pierre Borgnat}}
\newcommand{\PGlong}{{Paulo Gon\c{c}alves}}
\newcommand{\NTshort}{{N.~Tremblay}}
\newcommand{\PBshort}{{P.~Borgnat}}
\newcommand{\PGshort}{{P.~Gon\c{c}alves}}
\newcommand{\ENSLPB}{{Universit\'e Lyon, ENS de Lyon, Univ Lyon 1, CNRS, Laboratoire de Physique $\&$ IXXI, Lyon, France}}
\newcommand{\ENSLPG}{{Universit\'e Lyon, ENS de Lyon, Univ Lyon 1, CNRS, Inria, LIP $\&$ IXXI, Lyon, France}}
\newcommand{\GIPSANT}{{Univ. Grenoble Alpes, CNRS, GIPSA-lab, Grenoble, France}}
\newcommand{\ANR}{{This work was supported by the ANR grant \textsc{Graphsip} ANR-14-CE27-0001-02 and ANR-14-CE27-0001-03.}}
\graphicspath{{img/}}


\ifels
	\title{\longtitle\tnoteref{t1}}
	\tnotetext[t1]{\ANR.}
	\author[gipsant]{\NTlong}
	\author[enslpb]{\PBlong}
	\author[enslpg]{\PGlong}
	\address[gipsant]{\GIPSANT}
	\address[enslpb]{\ENSLPB}
	\address[enslpg]{\ENSLPG}
\else
	\title{\longtitle}
	\author{\NTlong\thanks{\NTshort\ is with \GIPSANT}, \PGlong\thanks{\PGshort\ is with \ENSLPG}, \PBlong\thanks{\PBshort\ is with \ENSLPB.}}
\fi

%
\begin{document}

\ifels










\begin{abstract}
Basic operations in graph signal processing consist in processing signals indexed on graphs
either by filtering them, to extract specific part out of them, or by changing their domain of representation,
using some transformation or dictionary more adapted to represent the information contained in them.
The aim of this chapter is to review general concepts for the introduction of filters and representations
of graph signals. 
We first begin by recalling the general framework to achieve that, which put the emphasis on
introducing some spectral domain that is relevant for graph signals to define a Graph Fourier Transform. 
We show how to introduce a notion of frequency analysis for graph signals by looking at their variations.
Then, we move to the introduction of graph filters, that are defined like the classical equivalent
for 1D signals or 2D images, as linear systems which operate on each frequency of a signal.
Some examples of filters and of their implementations are given. 
Finally, as alternate representations of graph signals, we focus on multiscale transforms that 
are defined from filters. 
Continuous multiscale transforms such as spectral wavelets on graphs are reviewed, as well as the versatile approaches
of filterbanks on graphs. 
Several variants of graph filterbanks are discussed, for structured as well as arbitrary graphs, with a focus on the central
point of the choice of the decimation or aggregation operators.
\end{abstract}


	\maketitle
	\tableofcontents
	\section{Graph Fourier Transform and Frequencies}
\label{chapfilt:sec1}

\subsection{Introduction}
\label{chapfilt:ssec11}
Graph Signal Processing (GSP) has been introduced in the recent past using at least two complementary
formalisms: on the one hand, the discrete signal processing on graphs \cite{sandryhaila_TSP2013} 
which emphasizes the adjacency matrix as a shift operator on graph and develops an equivalent of Discrete Signal Processing (DSP) for
signals on graphs; and on the other hand, the approaches rooted in graph spectral analysis, which
rely on the spectral properties of a Laplacian matrix (or operator) on a graph \cite{shuman_SPMAG2013,hammond_ACHA2011,narang_TSP2012}.
Both approaches yield an analogy for harmonic analysis of graph signals through the definition of
a Graph Fourier Transform as being the projection of a signal in the spectral domain of the chosen matrix or operator. 
While the technical details vary, and some interpretations in the vertex domain may differ, the fundamental objective of both approaches is to decompose a signal onto components of different frequencies and 
to design filters that can extract or modify parts of a graph signal according to these frequencies, 
e.g. providing notions of low-pass, band-pass, or high-pass filters for graph signals. 
In this section, both approaches (along with other variations) are seen as specific instances of the general guideline for defining Graph Fourier Transform and its associated frequency analysis. 

\paragraph{Notations.}
Vectors are written in bold with small letters, and matrices in bold and capital letters.
Let $\mathcal{G}=(\mathcal{V},\mathcal{E},\mathbf{A})$ be a  graph with $\mathcal{V}$ the set 
of $N$ nodes, $\mathcal{E}$ the set of edges, and $\mathbf{A}$ the weighted adjacency matrix in $\mathbb{R}^{N\times N}$.
If $\mathbf{A}_{ij}=0$, there is no connection from node $i$ to node $j$, otherwise, $\mathbf{A}_{ij}$ is the weight 
of the edge starting from $i$ and pointing to $j$\footnote{In the literature, the converse convention is sometimes chosen (e.g. in \cite{sandryhaila_TSP2013}), hence the $\mathbf{A}^\top$ occasionally appearing in this chapter.}. 
If an undirected edge exists between $i$ and $j$, then $\mathbf{A}_{ij}=\mathbf{A}_{ji}$. 
We restrict ourselves to adjacency matrices with positive or null entries: $\mathbf{A}_{ij}\geq0$. Finally, the symbol $\mathbf{I}$ denotes the identity matrix (its dimension should be clear with the context), and $\bm{\delta}_i$ is a vector whose $i$-th entry is equal to $1$ while all other entries are equal to $0$. 



\subsection{Graph Fourier Transform}
\label{chapfilt:ssec12}

\begin{definition}[graph signal]
A graph signal is a vector $\mathbf{x} \in \mathbb{R}^N$ whose component $x_i$ is considered to be defined on vertex $i$. 
\end{definition}

A Graph Fourier Transform (GFT) is defined via a choice of reference operator admitting a spectral decomposition. Representing a graph signal in this spectral domain is interpreted as a GFT. 
We review the standard properties and the various definitions proposed in the literature.

Consider  a matrix $\mathbf{R}\in\mathbb{R}^{N\times N}$. To be admissible as a reference operator for the graph, it is often required that for any pair of nodes $i\neq j$, $\mathbf{R}_{ij}$ and $\mathbf{R}_{j\,i}$ are equal to zero if $i$ and $j$ are not connected, as this will help for efficient implementations of filters (see in \ref{ssec:chapfilt:ImplementationGFilt}). 
We assume furthermore that $\mathbf{R}$ is diagonalizable in $\mathbb{C}$. In fact, if $\mathbf{R}$ is not diagonalizable,  one needs to consider Jordan's decomposition, which is out-of-scope of this chapter. We refer the reader to \cite{sandryhaila_TSP2013} for technical details on how to handle this
case. Nevertheless, in practice, we claim that it often suffices to consider only diagonalizable operators, since: {\em i)}~diagonalizable matrices in $\mathbb{C}$ are dense in the space of matrices; and {\em ii)}~graphs under consideration are generally measured (should they model social, sensor or biological networks,...) with some inherent noise. So, if one ends up unluckily with a non-diagonalizable matrix, a small perturbation within the noise level will make it diagonalizable -- provided the graphs have no specific regularities that are to be kept.
Still, we may assume with only a small loss of generality that the reference operator $\mathbf{R}$ has a spectral decomposition:
\begin{equation}
\label{eq:eigendecomp}
\mathbf{R} = \mathbf{U} \mathbf{\Lambda}  \mathbf{U}^{-1}, 
\end{equation}
with $\mathbf{U}$ and $\mathbf{\Lambda}$ in $\mathbb{C}^{N\times N}$. 
The columns of $\mathbf{U}$, denoted $\mathbf{u}_k$,  are the right eigenvectors of $\mathbf{R}$, while the lines of $\mathbf{U}^{-1}$, denoted $\mathbf{v}_k^\top$, are its left eigenvectors. $\mathbf{\Lambda}$ is the diagonal matrix of the eigenvalues $\lambda_k$. 
The GFT is defined as the transformation of a graph signal from the canonical ``node'' basis to its representation in the eigenvector basis:
\begin{definition}[Graph Fourier Transform]
For a given diagonalizable reference operator $\mathbf{R} =  \mathbf{U} \mathbf{\Lambda}  \mathbf{U}^{-1}$
acting on a graph $\mathcal{G}$, the GFT of a graph signal $\mathbf{x} \in \mathbb{R}^N$
is:
\begin{equation}
\label{eq:GFT1}
\mathbb{F}_{\mathcal{G}}\: \mathbf{x} \stackrel{.}{=} \hat{\mathbf{x}} \stackrel{.}{=} \mathbf{U}^{-1} \mathbf{x}.
\end{equation}
\end{definition}
The GFT's coefficients are simply the projections on the left eigenvectors of $\mathbf{R}$:
\begin{equation}
\label{eq:GFT2}
\forall k~~~~~\left(\mathbb{F}_{\mathcal{G}}\: \mathbf{x}\right)_k = \hat{\mathbf{x}}_k = \mathbf{v}_k^T \mathbf{x}.
\end{equation}
Moreover, the GFT is invertible: $\mathbf{U} \:\hat{\mathbf{x}}= \mathbf{U}\mathbf{U}^{-1}\mathbf{x} =\mathbf{x}$. While, in general, the complex Fourier modes $\mathbf{u}_k$ are not orthogonal to each other, when  $\mathbf{R}$ is symmetric, the following additional properties hold true. 
\\

\noindent\textbf{The special case of symmetric reference operators.} If in addition to be real, $\mathbf{R}$ is also symmetric, then $\mathbf{U}$ and $\mathbf{\Lambda}$ are real matrices, and $\mathbf{U}$ may be found orthonormal, that is: $\mathbf{U}^{-1}=\mathbf{U}^\top$. In this case, $\mathbf{v}_k=\mathbf{u}_k$, the GFT of $\mathbf{x}$ is  simply
$\hat{\mathbf{x}} = \mathbf{U}^{\top} \mathbf{x}$ with coefficients $\hat{\mathbf{x}}_k = \mathbf{u}_k^\top \mathbf{x}$, and the Parseval relation holds: $||\hat{\mathbf{x}} ||_2 =  ||{\mathbf{x}} ||_2$. Hereafter, when a symmetric operator $\mathbf{R}$ is encountered, one should have these properties in mind.  
\\

Finally, the interpretation of the graph Fourier modes $\mathbf{u}_k$ in terms of oscillations and frequencies
will be the scope of Section~\ref{chapfilt:ssec13}. In the following, we list possible choices of reference operators, all diagonalizable with different $\mathbf{U}$ and $\mathbf{\Lambda}$, thus all defining different possible GFTs. 

\paragraph{GFT for undirected graphs}
Undirected graphs are characterized by symmetric adjacency matrices: $\forall (i,j)~~\mathbf{A}_{ij}=\mathbf{A}_{j\,i}$. This does not necessarily mean that $\mathbf{R}$ has to be chosen symmetric as well, as we will see with the example $\mathbf{R}=\mathbf{L_{rw}}$ below.
The following  choices of $\mathbf{R}$ are the most common in the undirected case. 
\\

\noindent\textbf{The combinatorial Laplacian [symmetric].} 
The first choice for $\mathbf{R}$, advocated in \cite{shuman_SPMAG2013,hammond_ACHA2011,narang_TSP2012}  is to
use the graph's combinatorial Laplacian, having properties studied in \cite{chung_book1997}. It is defined as $\mathbf{L}=\mathbf{D}-\mathbf{A}$ 
where $\mathbf{D}$ is the diagonal matrix of nodes' strengths, defined as $\mathbf{D}_{ii}= \mathbf{d}_i = \sum_{j} \mathbf{A}_{ij}$.
If the adjacency matrix is binary (i.e., unweighted), this strength reduces to the degree of each node. 
The advantages of using $\mathbf{L}$ are twofold.
{\em i)}~It is an intuitive manner to define the GFT: $\mathbf{L}$ being the discretized version of the continuous Laplacian operator which admits the Fourier modes as eigenmodes, it is fair to use by analogy the eigenvectors of $\mathbf{L}$ as graph Fourier modes. Moreover, this choice is associated to a complete theory of vector calculus (e.g., gradients) for graph signals~\cite{grady2010discrete} that is useful to solve PDE's on graphs. {\em  ii)}~$\mathbf{L}$ has well known mathematical properties \cite{chung_book1997}, giving
ways to characterize the graph or functions and processes on the graph (see also \cite{BarratBook}). 
Most prominently, it is semi-definite positive (SDP) and its eigenvalues, being all positive or null, will
serve in the following to bind eigenvectors with a notion of frequency. 
\\

\noindent\textbf{The normalized Laplacian [symmetric].}  A second choice for $\mathbf{R}$ is the normalized Laplacian 
$\mathbf{L_n}=\mathbf{D}^{-\frac{1}{2}}\mathbf{L}\mathbf{D}^{-\frac{1}{2}} =\mathbf{I}-\mathbf{D}^{-\frac{1}{2}}\mathbf{A}\mathbf{D}^{-\frac{1}{2}}$. 
An interesting property of this choice of Laplacian is that all its eigenvalues lie between 0 and 2~\cite{chung_book1997}.
\\

\noindent\textbf{The adjacency matrix, or deformed Laplacian [symmetric].} Another choice for $\mathbf{R}$ is the adjacency matrix\footnote{We use $\mathbf{A}^\top$ instead of $\mathbf{A}$ for consistency purposes with~\cite{sandryhaila_TSP2013}, whose convention for directed edges in the adjacency matrix is converse to ours: what they call $\mathbf{A}$ is what we call $\mathbf{A}^\top$. Without any influence in the undirected case, it has an impact in the directed case.} $\mathbf{A}^\top$, as advocated in~\cite{sandryhaila_TSP2013}. One readily sees that the eigenbasis $\mathbf{U}$ of $\mathbf{A}^\top$ and the eigenbasis of the deformed Laplacian $\mathbf{L_d} = \mathbf{I} - \frac{\mathbf{A}^\top}{||\mathbf{A}||_2}$, where $||.||_2$ is the operator 2-norm, are the same. Therefore, the corresponding GFTs are equivalent and, for consistency in the presentation, we will use $\mathbf{R}=\mathbf{L_d}$ here. 
\\

\noindent\textbf{The random walk Laplacian [not symmetric].} The random walk Laplacian is yet another Laplacian reading: $\mathbf{L_{rw}}=\mathbf{D}^{-1}\mathbf{L} =\mathbf{I}-\mathbf{D}^{-1}\mathbf{A}$, where $\mathbf{D}^{-1}\mathbf{A}$ 
serves also to describe a uniform random walk on the graph.  
Even though $\mathbf{L_{rw}}$ is not symmetric, we know it diagonalizes in $\mathbb{R}$. In fact, if $\mathbf{u}_k$ is an eigenvector of $\mathbf{L_n}$ with eigenvalue $\lambda_k$, then $\mathbf{D}^{-\frac{1}{2}}\mathbf{u}_k$ is an eigenvector of $\mathbf{L_{rw}}$ with same eigenvalue. Thus,  $\mathbf{L_{rw}}$ has the same eigenvalues as $\mathbf{L_{n}}$, and its Fourier basis $\mathbf{U}$ is real but not orthonormal.  
\\

\noindent\textbf{Other possible definitions of the reference operator.} For instance, the consensus operator (of the form $ \mathbf{I} - \sigma \mathbf{L}$  with some suitable $\sigma$) \cite{Kar_SPM2013}, 
a geometric Laplacian~\cite{Rabiei_TMI17}, or some other deformed Laplacian one may think of, are valid alternatives. 
\\

\noindent All these operators imply a different spectral domain (different $\mathbf{U}$ and $\mathbf{\Lambda}$) and, provided one has a nice frequency interpretation (which is the object of Section~\ref{chapfilt:ssec13}), they all define possible GFTs. In the graph signal processing literature, the first three operators ($\mathbf{L}$, $\mathbf{L_n}$ and $\mathbf{L_d}$) are the most widely used.

\paragraph{GFT for directed graphs}
For directed graphs, the adjacency matrix is no longer symmetric -- $\mathbf{A}_{ij}$ is not necessarily equal to $\mathbf{A}_{j\,i}$ -- which does not automatically imply that the reference operator $\mathbf{R}$ is not symmetric (e.g., the case $\mathbf{R}=\mathbf{Q}$ below).  This case is of great interest in some applications where the graph is naturally directed such as hyperlink graphs (there is a directed edge between website $i$ and website $j$ if there is a hyperlink in website $i$ directing to $j$). In directed graphs, the degree of node $i$ is separated in its out-degree, $d_{\text{out}}=\sum_j A_{ij}$, and its in-degree, $d_{\text{in}}=\sum_j A_{ji}$. 
\\

\noindent\textbf{Some straightforward approaches [not symmetric].} It is possible to readily transpose the previous notions to the directed case, choosing either $\mathbf{D_{out}}$ or $\mathbf{D_{in}}$ to replace $\mathbf{D}$ in the different formulations: e.g. $\mathbf{L}=\mathbf{D_{{in}}} - \mathbf{A}^\top$ as in~\cite{Singh_SPCOM2016}, $\mathbf{L_{rw}}=\mathbf{I}- \mathbf{D_{out}}^{-1}\mathbf{A}$ as in~\cite{Sevi_SPIE17},  
$\mathbf{L_n}=\mathbf{I}-\mathbf{D_{{out}}}^{-\frac{1}{2}}\mathbf{A}\mathbf{D_{{out}}}^{-\frac{1}{2}}$.
A notable choice is to directly use $\mathbf{L_d} = \mathbf{I} - \frac{\mathbf{A}^\top}{||\mathbf{A}||_2}$ as in~\cite{sandryhaila_TSP2013} (we recall that $\mathbf{L_d}$ and $\mathbf{R}=\mathbf{A}^\top$ are equivalent for they share the same eigenvectors and thus, define the same GFT). 
These matrices are no longer strictly speaking Laplacians as they are no longer SDP, but one may nonetheless  consider them as reference operators defining possible GFTs. 
Note that these definitions entail to choose (rather arbitrarily) either $\mathbf{D_{{out}}}$ or $\mathbf{D_{{in}}}$ in their formulations, with the notable exception of $\mathbf{L_d}$. In fact, $\mathbf{L_d}$ naturally generalizes  to the directed case and is a classical choice of $\mathbf{R}$ in this context~\cite{sandryhaila_TSP2013}. 
\\

\noindent\textbf{Chung's directed Laplacian [symmetric].} A less common approach in the graph signal processing community is the one provided by the directed Laplacians introduced by Chung~\cite{chung_laplacians_2005}. To define these Laplacians appropriate to the directed case, let us first introduce the random walk transition matrix (or operator) defined as $\mathbf{P}=\mathbf{D_{out}^{-1}A}$. It admits a stationary probability\footnote{Assuming the random walk is ergodic, i.e. irreducible and non-periodic.} $\bm{\pi} \in \mathbb{R}_+^N$ such that $\bm{\pi}^\top \mathbf{P} = \bm{\pi}^\top$. Writing $\bm{\Pi}=\text{diag}(\bm{\pi})$, Chung defines the following two directed Laplacians:
\begin{align}
 \label{eq:DirectedLaplacian-combinatorial}
\mathbf{Q}&=\bm{\Pi}-\frac{\bm{\Pi}\mathbf{P}+\mathbf{P}^\top\bm{\Pi}}{2},\\
\label{eq:DirectedLaplacian-normalised}
\mathbf{Q_n}&=\bm{\Pi}^{-\frac{1}{2}}\mathbf{Q}\bm{\Pi}^{-\frac{1}{2}}=\mathbf{I}-\frac{\bm{\Pi}^{\frac{1}{2}}\mathbf{P}\bm{\Pi}^{-\frac{1}{2}}+\bm{\Pi}^{-\frac{1}{2}}\mathbf{P}^\top\bm{\Pi}^{\frac{1}{2}}}{2}.
\end{align}
Both the combinatorial $\mathbf{Q}$ and the normalized $\mathbf{Q_n}$ directed Laplacians verify the properties of Laplacian matrices:
SDP, negative (or null) entries everywhere except on the diagonal and real symmetric. 
It is easy to see that (\ref{eq:DirectedLaplacian-combinatorial}) and (\ref{eq:DirectedLaplacian-normalised}) generalise the definitions of the undirected case since for an undirected graph, 
$\bm{\Pi} = \mathbf{D}$, $\bm{\Pi} \mathbf{P} = \mathbf{A}$; hence $\mathbf{Q}$ is the combinatorial Laplacian $\mathbf{L}$, and $\mathbf{Q_n}$ is its normalized version $\mathbf{L_n}$. 
\\

\noindent\textbf{Other possible definitions of the reference operator.} The previous definitions of $\mathbf{L_{rw}}$ and $\mathbf{L_d}$ for undirected graphs may also be generalized to the directed Laplacian framework to obtain:
\begin{align}
 \mathbf{Q_{rw}}=\mathbf{I}-\frac{\mathbf{P}+\bm{\Pi}^{-1}\mathbf{P}^\top\bm{\Pi}}{2} ~~~\text{ and }~~~
\mathbf{Q_d}=\mathbf{I}-\frac{\bm{\Pi}\mathbf{P}+\mathbf{P}^\top\bm{\Pi}}{||\bm{\Pi}\mathbf{P}+\mathbf{P}^\top\bm{\Pi}||_2}.
\end{align}

\noindent\textbf{Additional notes.} Other GFTs for directed graphs were proposed via the Hermitian Laplacian as introduced in \cite{Yu_AMC_2015}, which generalizes $\mathbf{A}^\top$. A very different approach is to construct a Graph Fourier basis directly from an optimization scheme, requiring some notion of smoothness, or generalization of it -- see \cite{Sardellitti_JSTSP17,Shafipour_GlobalSIP2017}. We will not consider this recent approach here.

\subsection{Frequencies of graph signals}
\label{chapfilt:ssec13}

To complement the notion of GFT, one needs to introduce some frequency analysis of the Fourier modes on the graph.
The general way of doing so is to compute how fast a mode oscillates on the graph, and the tool of preference
is to compute their variations along the graph. 
Let us first note the following facts:
\begin{itemize}
 \item \textbf{In the undirected case}, $\mathbf{L}$ is semi-definite positive (SDP). In fact, one may write:
 \begin{align}
 \label{chapfilt:eq:L_variation}
  V_\mathbf{L}(\mathbf{x}) = \mathbf{x}^\top\mathbf{Lx} = \frac{1}{2} \sum_{(i,j)\,\in\, \mathcal{E}}  \mathbf{A}_{ij} \big( x_i - x_j \big)^2 \geq 0.
  \end{align}
 This function is also called the Dirichlet form.
  Similarly, $\mathbf{L_n}$ is also SDP:
  \begin{equation}
  \label{chapfilt:eq:Ln_variation}
 V_\mathbf{L_n}(\mathbf{x}) = \mathbf{x}^\top\mathbf{L_n x} = \frac{1}{2} \sum_{(i,j)\,\in\, \mathcal{E}}  \mathbf{A}_{ij} \left( \frac{x_i}{\sqrt{d_i}} - \frac{x_j}{\sqrt{d_j}} \right)^2 \geq 0. 
\end{equation}
As far as we know, the Dirichlet forms of $\mathbf{L_{rw}}$ and $\mathbf{L_d}$ do not have such a nice formulation as a sum of local quadratic variations over all edges of the graph. They are nevertheless SDP because: {\em i)}~$\mathbf{L_{rw}}$ and $\mathbf{L_n}$ have the same spectrum; {\em ii)}~the symmetry of $\mathbf{L_d}$ implies real eigenvalues, and the maximum eigenvalue of $\mathbf{A}/||\mathbf{A}||_2$ being $1$ by definition of the norm, the minimum eigenvalue of $\mathbf{L_d}$ is 0.
\item \textbf{In the directed case}, all directed Laplacians $\mathbf{Q}$, $\mathbf{Q_n}$, $\mathbf{Q_{rw}}$ and $\mathbf{Q_d}$ are SDP due to similar arguments. $\mathbf{Q}$ and $\mathbf{Q_n}$ also have Dirichlet forms in terms of a sum of local quadratic variations, e.g.:
\begin{align}
\label{chapfilt:eq:Q_variation}
 V_\mathbf{Q}(\mathbf{x}) = \mathbf{x}^\top\mathbf{Q}\mathbf{x}
 = \frac{1}{2}\sum_{(i,j)\,\in\, \mathcal{E}} \pi_i \mathbf{P}_{ij}  \big(x_i-x_j \big)^2 \geq 0.
\end{align}
The other reference operators $\mathbf{L}=\mathbf{D_{{in}}} - \mathbf{A}^\top$, $\mathbf{L_{rw}}=\mathbf{I}- \mathbf{D_{out}}^{-1}\mathbf{A}$,  
$\mathbf{L_n}=\mathbf{I}-\mathbf{D_{{out}}}^{-\frac{1}{2}}\mathbf{A}\mathbf{D_{{out}}}^{-\frac{1}{2}}$, $\mathbf{L_d} = \mathbf{I} - \frac{\mathbf{A}^\top}{||\mathbf{A}||_2}$  are not SDP as their eigenvalues may be complex. Nevertheless, the real part of their eigenvalues are always non-negative. This is quite clear for $\mathbf{L}_d$. For $\mathbf{L}=\mathbf{D_{{in}}} - \mathbf{A}^\top$, as the sum of row $i$ of $\mathbf{A}^\top$ is equal to $\mathbf{d_{in}}(i)$, Gershgorin circle theorem ensures that all eigenvalues of $\mathbf{L}$ are non-negative. For $\mathbf{L_{rw}}$, as  $\mathbf{P}=\mathbf{D_{out}}^{-1}\mathbf{A}$ is a stochastic matrix, the Perron-Frobenius theorem ensures that its  eigenvalues are in the disk of radius 1 in the complex plane, hence the real part of $\mathbf{L_{rw}}$'s eigenvalues are non-negative. As $\mathbf{L_{rw}}$ and $\mathbf{L_{n}}$ have the same set of eigenvalues, it is also true for $\mathbf{L_n}$. 
\end{itemize}
To sum up, all the reference operators considered are either SDP (with real non-negative eigenvalues) or have eigenvalues whose real component is non-negative.

\begin{definition}[Graph frequency]  \label{def:frequency}
Let $\mathbf{R}$ be a reference operator. If its eigenvalues are real, the generalized graph frequency $\nu$ of a graph Fourier mode $\mathbf{u}_k$ is:
\begin{equation}
\nu(\mathbf{u}_k) = \lambda_k\geq0.
\end{equation}
If its eigenvalues are complex, two different definitions of the generalized graph frequency $\nu$ of a graph Fourier mode $\mathbf{u}_k$ exist: 
\begin{equation}
\nu(\mathbf{u}_k) = \text{Re}(\lambda_k)\geq0 ~~~\text{ or }~~~\nu(\mathbf{u}_k) = |\lambda_k|\geq0.
\end{equation}
\end{definition}

\noindent\textbf{Remarks.} In the complex eigenvalues case, it is a matter of choice wether we consider the imaginary part of the eigenvalues or not. There is no current consensus on this question. Also, note that all eigenvectors associated to the same $\lambda_k$ have the same frequency. We thus define the frequency of an eigenspace, denoted by $\nu(\lambda_k)$.  \\

\noindent\textbf{Justification: the link between frequency and variation.} Two types of variation measures have been considered in the literature to show the consistency between this definition of graph frequencies and a notion of oscillation over the graph. The first one is based on the quadratic forms of the Laplacian operators. For instance, in the undirected case with the combinatorial Laplacian, Eq.~\eqref{chapfilt:eq:L_variation} applied to any normalized Fourier mode $\mathbf{u}_k$ defined from $\mathbf{L}$ reads:
\begin{align}
 V_\mathbf{L}(\mathbf{u}_k) = \mathbf{u}_k^\top\mathbf{Lu}_k = \frac{1}{2} \sum_{(i,j)\,\in\, \mathcal{E}}  \mathbf{A}_{ij}\, \big( u_k(i) - u_k(j) \big)^2=\lambda_k \,||\mathbf{u}_k||_2^2 = \lambda_k.
\end{align}
The larger the local quadratic variations of $\mathbf{u}_k$, the larger its frequency $\lambda_k$. Eqs.~\eqref{chapfilt:eq:Ln_variation} and \eqref{chapfilt:eq:Q_variation} (as well as its counterpart for $\mathbf{Q_n}$) enable to make this variation-frequency link for $\mathbf{L_n}, \mathbf{Q}$ and $\mathbf{Q_n}$. 
The second general type of variation that has been defined~\cite{sandryhaila_TSP14} is the total variation between a signal and its shifted version on the graph (where ``shifting'' a signal is understood as applying the adjacency matrix to it). For instance, in the case of $\mathbf{L_d}$, the associated variation reads\footnote{In~\cite{sandryhaila_TSP14}, the  $\ell_1$ norm is used, but the $\ell_2$ norm can be used equivalently: this is a matter of how one wants to normalize the eigenvectors. In this paper, we consider the classical Euclidean norm, hence $\ell_2$.}:
\begin{equation}
\label{eq:TV_A}
{V}_{\mathbf{L_d}}(\mathbf{x}) =   \left|\left|\mathbf{x} - \frac{1}{|\mu_{max}|} \mathbf{A}^\top \mathbf{x} \right|\right|_2 =\left|\left|  \mathbf{L_d x} \right|\right|_2 ,
\end{equation}
where $\mu_i$ designate the eigenvalues of $\mathbf{A}$ and $\mu_{max}$  the one of maximum magnitude. 
The  variation of the graph Fourier mode $\mathbf{u}_k$ from $\mathbf{L_d}$ thus reads: 
\begin{align}
 {V}_{\mathbf{L_d}}(\mathbf{u}_k)= \left|\left|  \mathbf{L_d u}_k \right|\right|_2 = |\lambda_k|\; ||\mathbf{u}_k||_2 = |\lambda_k|. 
\end{align}

The larger the total variation of $\mathbf{u}_k$, that is, the further is $\mathbf{u}_k$ from its shifted version along the graph, the larger its frequency\footnote{There is a direct correspondence between $\lambda_i$, the eigenvalues of $\mathbf{L_d}$, 
and $\mu_i$: $\lambda_i=1-\mu_i/|\mu_{max}|$. We thus recover the results in~\cite{sandryhaila_TSP14}: the closer is $\mu_k$ from $|\mu_{max}|$ in the complex plane, the smaller the  total variation of the associated Fourier mode $\mathbf{u}_k$.}. For $\mathbf{L}=\mathbf{D_{in}}-\mathbf{A}^\top$, a similar approach detailed in~\cite{Singh_SPCOM2016} links the variation of 
$\mathbf{u}_k$ to its frequency $|\lambda_k|$. 

It may also happen that for some operator $\mathbf{R}$, none of these two types of variations (quadratic forms or total variation) show natural. Then, one may use the variation $V_{\mathbf{R'}}$ based on another related operator $\mathbf{R'}$ to define frequencies. For instance, in~\cite{Sevi_SPIE17,Sevi_GRETSI17}, authors considered the random walk Laplacians $\mathbf{R}=\mathbf{P}=\mathbf{D_{{out}}}^{-1}\mathbf{A}$ as the reference operator to define the GFT, while the directed combinatorial Laplacian $\mathbf{R'}=\mathbf{Q}$ is used to measure the variations. With these choices, they showed that $V_{\mathbf{Q}}(\mathbf{u}_k)$ is equal to $\text{Re}(\lambda_k)$ up to a normalization constant. Another example of such a case is in \cite{BenAlaya:2016} where one of the reference operators to build filters is the isometric translation introduced in \cite{Girault2015} while the variational operators are built upon the combinatorial Laplacian. Note finally that other notions of variations like the Hub authority score can be drawn from the literature. We refer the reader to~\cite{Anis_TSP16} for details on that, as well as a complementary discussion on GFTs and their related variations. 
\\

Definition~\ref{def:frequency} associates graph frequencies only to graph Fourier modes. For an arbitrary signal, we have the following definition
of its frequency analysis:
\begin{definition}[Frequency analysis]  \label{def:frequencySpectrum}
The frequency analysis of any graph-signal $\mathbf{x}$ on $\mathcal{G}$ is given by its components
$(\mathbb{F}_\mathcal{G} \mathbf{x} )_k$ at frequency $\nu(\mathbf{u}_k)$, as given in Definition~\ref{def:frequency}. 
\end{definition}

\begin{figure}\centering
\begin{minipage}{0.67\textwidth}
 a) \includegraphics[width=0.42\textwidth]{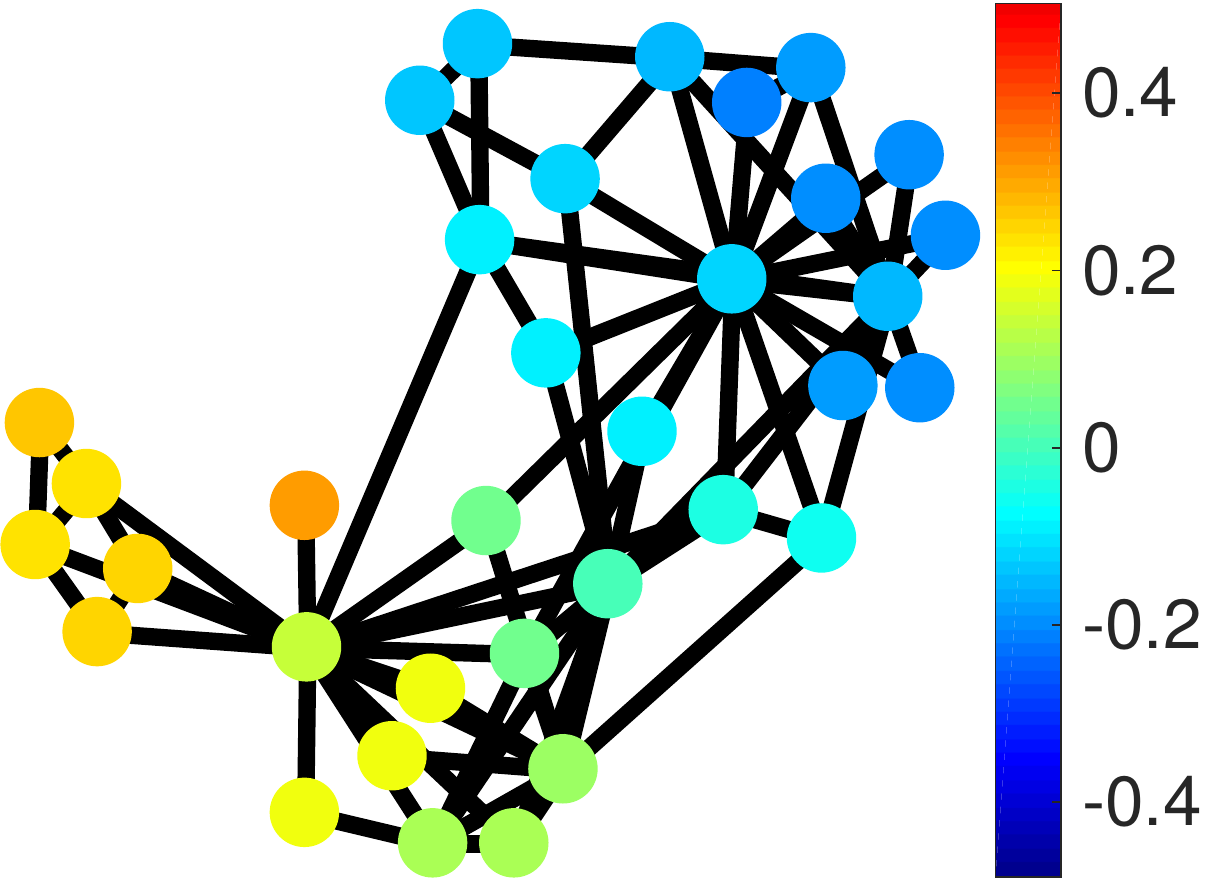}\hfill
 b)  \includegraphics[width=0.42\textwidth]{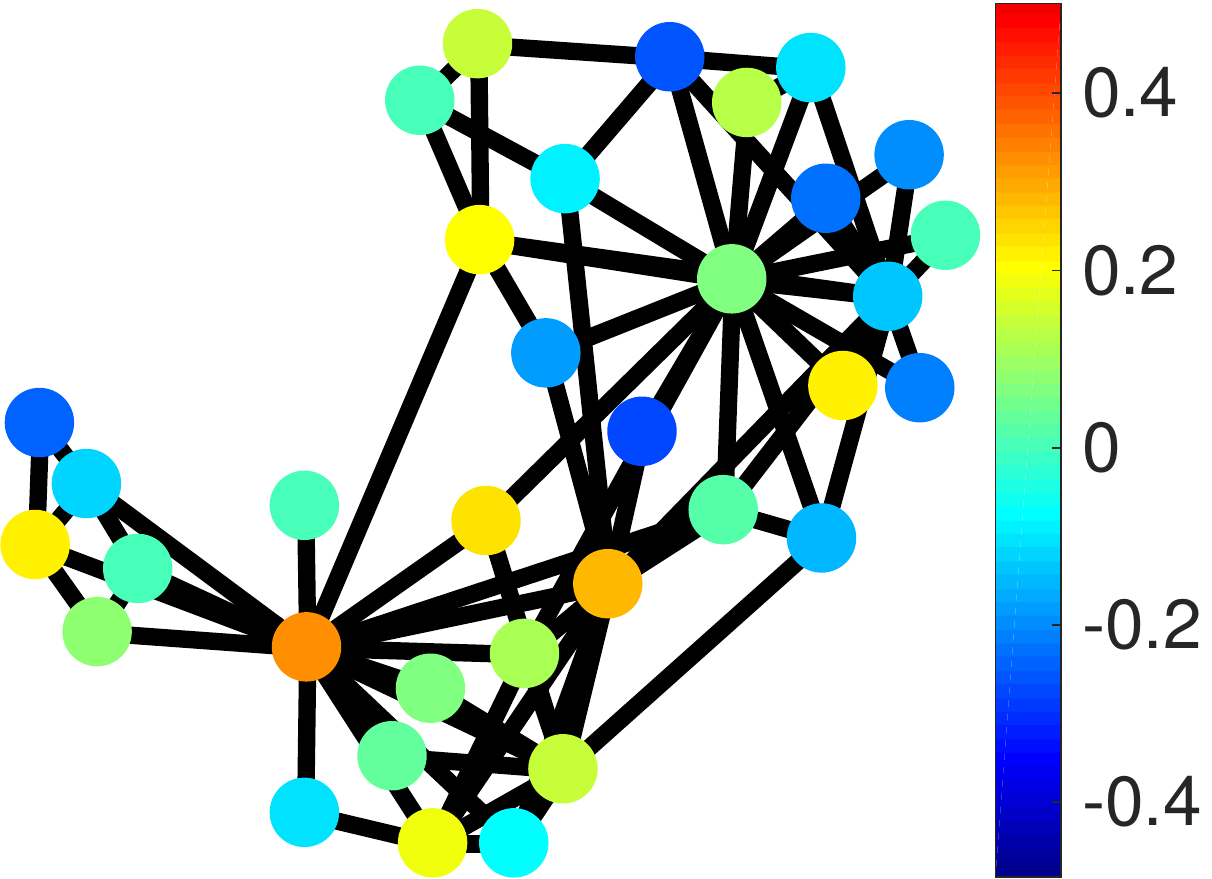}
\end{minipage}\hfill
\begin{minipage}{0.28\textwidth}
  c)\includegraphics[width=0.9\textwidth]{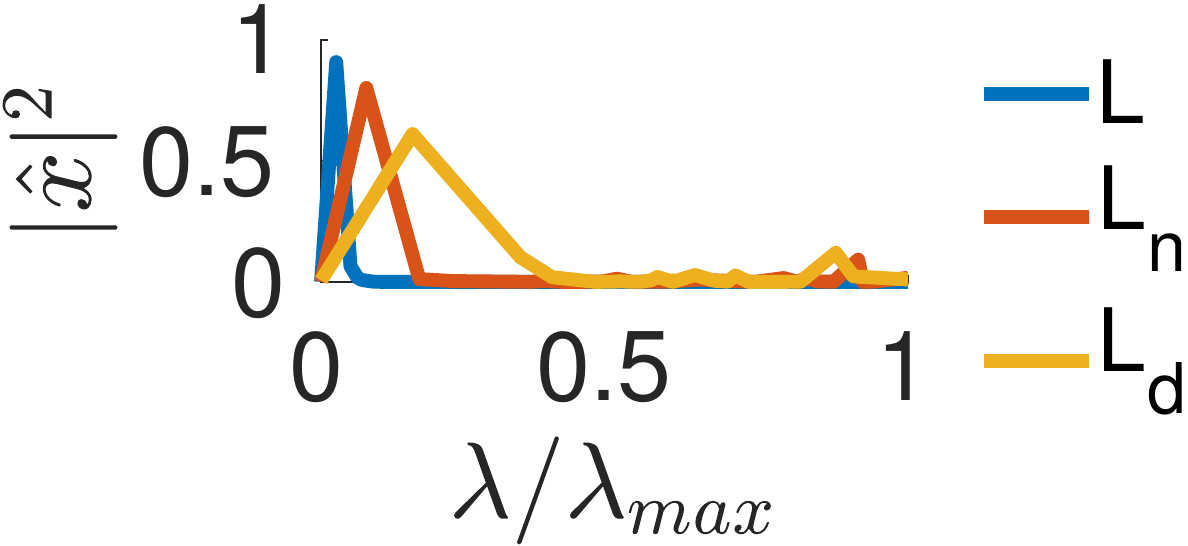}\\
 d)\includegraphics[width=0.9\textwidth]{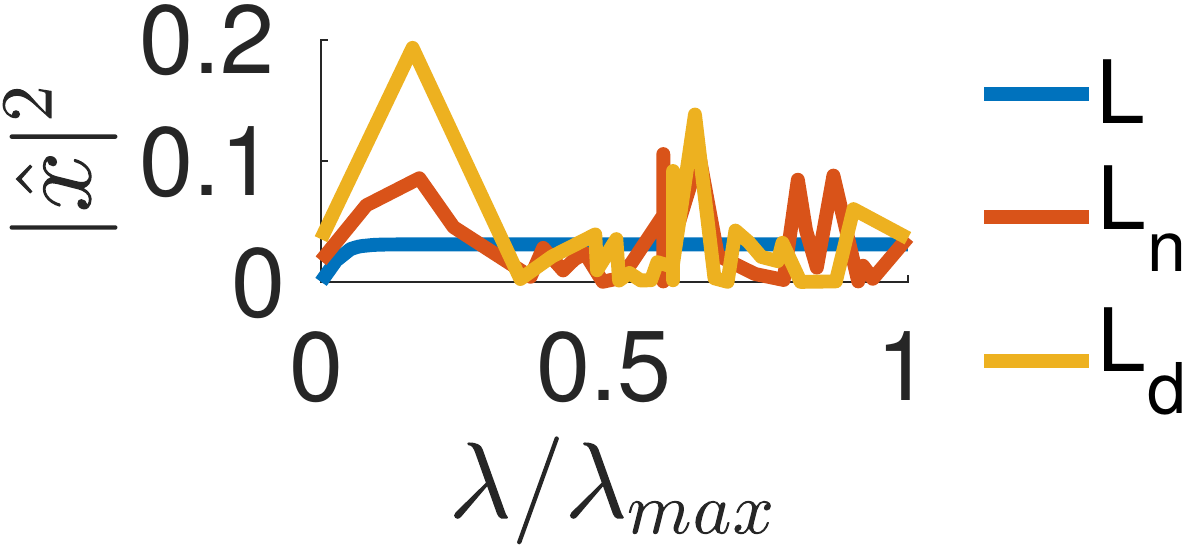}
\end{minipage}
 \caption{\label{chapfilt:fig:illus} \textbf{Two graph signals and their GFTs.} Plots a) and b) represent respectively, a low-frequency and a high-frequency graph signal on the binary Karate club graph~\cite{karate_club}. Plots c) and d) are their corresponding GFTs computed for three reference operators: $\mathbf{L}$, $\mathbf{L_n}$ and $\mathbf{L_d}$ (equivalent to the GFT defined via the adjacency matrix). }
\end{figure}

\subsection{Implementation and illustration}

\noindent\textbf{Implementation.} The implementation of the GFT requires to diagonalize $\mathbf{R}$,  costing $\mathcal{O}(N^3)$ operations in general, and $\mathcal{O}(N^2)$ memory space to store $\mathbf{U}$. Then, applying $\mathbf{U}$ to a signal $\mathbf{x}$ to obtain $\hat{\mathbf{x}}=\mathbf{U}^{-1}\mathbf{x}$ costs $\mathcal{O}(N^2)$ operations. These costs are prohibitive for large graphs ($N\gtrsim 10^4$ nodes). Recent works investigate how to reduce these costs, by tolerating an approximation on $\hat{\mathbf{x}}$. In the cases where $\mathbf{R}$ is symmetric, authors in~\cite{magoarou_approximate_2017, ASILOMAR2017_FGFT} suggest to approximate $\mathbf{U}$ by a product of $J=\mathcal{O}(N\log{N})$ Givens rotations, using a truncated Jacobi algorithm. The resulting approximated fast GFT requires $\mathcal{O}(N^2\log^2{N})$ operations to compute the Givens rotations, and $\mathcal{O}(N\log{N})$ operations to compute the approximate GFT of $\mathbf{x}$. The difficulty in designing fast GFTs boils down to the difficulty of deciphering eigenvalues that are very close to one another. This difficulty disappears once we consider smooth filtering operations that are much easier to efficiently approximate, as we will see in the next section.
\\

\noindent\textbf{Illustrations.} To illustrate the GFT and the notion of frequency, we show in Fig.~\ref{chapfilt:fig:illus} two graph signals on the Karate club graph~\cite{karate_club}, corresponding to instances of a low frequency and a high frequency signal, respectively. We also show their GFTs, computed for three different choices of $\mathbf{R}$: $\mathbf{L}$, $\mathbf{L_n}$ and $\mathbf{L_d}$. 

The choice of normalization and the choice to take explicitly the degree matrix into account or not in the definition of $\mathbf{R}$ has a quantitative impact on the GFTs. Nevertheless, qualitatively, a graph signal that varies slowly (resp. rapidly) along any path of the graph is low-frequency (resp. high-frequency). In Fig.~\ref{chapfilt:fig:illus_bis}, we show how the GFT is not only sensitive to the graph signal but also to the underlying graph structure. 

\begin{figure}\centering
\begin{minipage}{\textwidth}\hfill
 a) \includegraphics[width=0.30\textwidth]{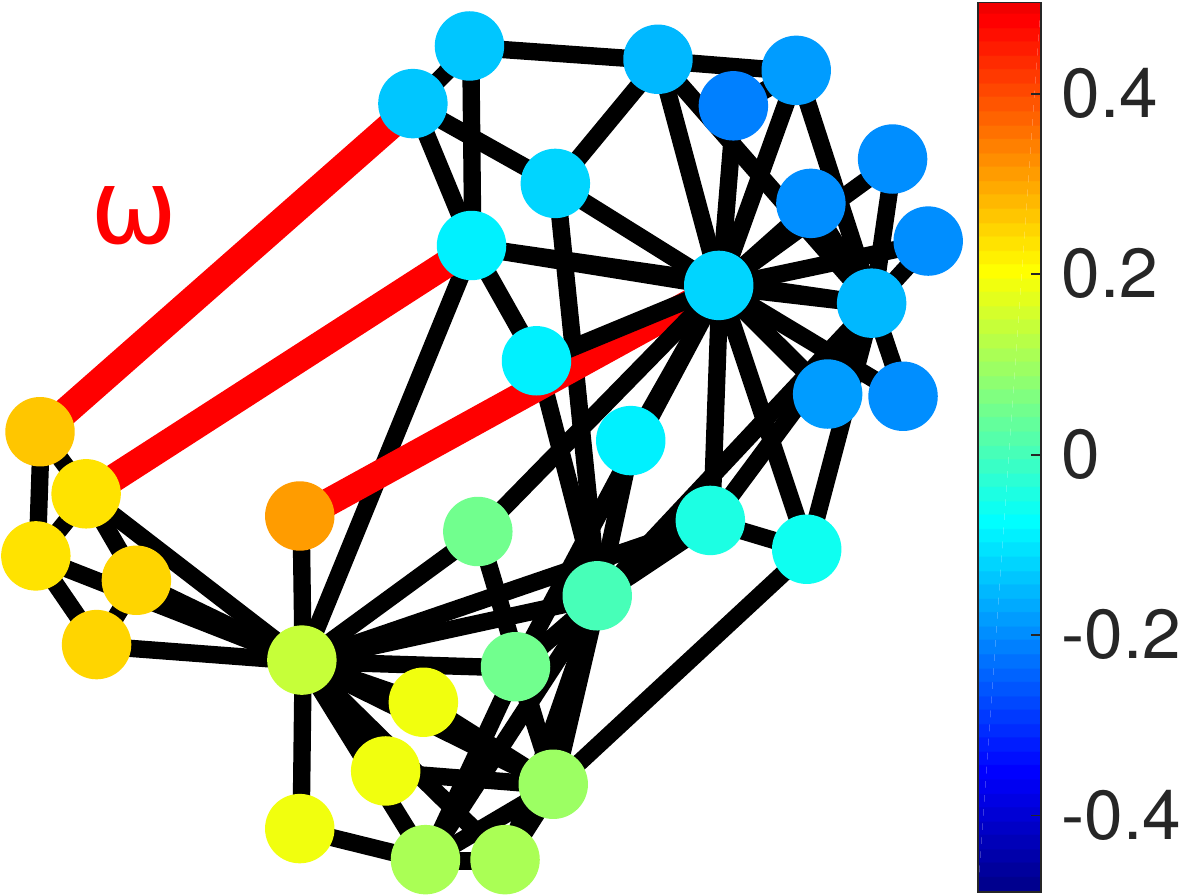}\hfill
 b)  \includegraphics[width=0.42\textwidth]{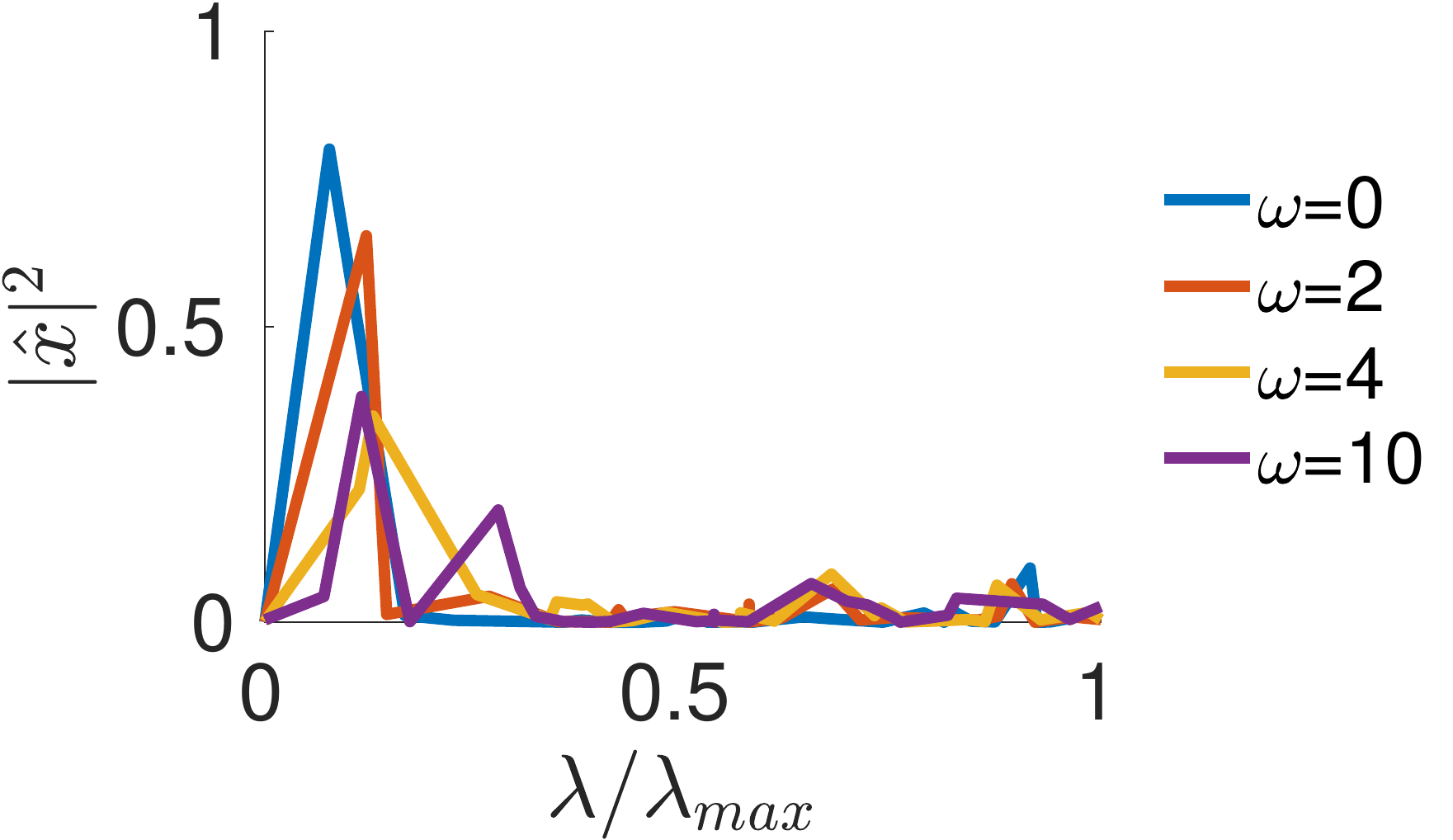}\hfill~
 \end{minipage}
 \caption{\label{chapfilt:fig:illus_bis}\textbf{Sensitivity of the GFT to the graph's topology.} Plot a) represents the same low-frequency graph signal as in Fig.~\ref{chapfilt:fig:illus}-a), but the underlying graph structure is altered by adding three edges with same weight $\omega$ (in red). Plot b) represents the variation of its GFT (here choosing $\mathbf{R}=\mathbf{L_n}$) with respect to the edges' weight $\omega$. }
\end{figure}

\section{Graph filters}
\label{chapfilt:sec2}
In this section, we assume that the reference operator $\mathbf{R}$ of the graph on which we wish to design filters is  diagonalizable in $\mathbb{C}$ as in the previous section. The order of the eigenvalues and eigenvectors is chosen frequency-increasing. That is, given a choice of frequency definition (either $\nu(\lambda_k)=\text{Re}(\lambda_k)\in\mathbb{R}^+$ or $\nu(\lambda_k)=|\lambda_k|\in\mathbb{R}^+$), one has $\nu(\mathbf{u}_1)\leq\nu(\mathbf{u}_2)\leq\ldots\leq\nu(\mathbf{u}_N)$. 


\subsection{Definition of graph filters}
\label{chapfilt:ssec21}

The reference operator $\mathbf{R}$ has an eigendecomposition as in Eq.~(\ref{eq:eigendecomp}), and it can also be written as a sum of projectors on all its eigenspaces:
\begin{align}
\label{chapfilt:eq:R_as_projectors}
 \mathbf{R} = \sum_{\lambda} \lambda~\text{\textbf{Pr}}_{\lambda},
\end{align}
where the sum is on all different eigenvalues $\lambda$ and $\text{\textbf{Pr}}_{\lambda}$ is the projector on the eigenspace associated to eigenvalue $\lambda$, i.e.: 
$$\text{\textbf{Pr}}_{\lambda}=\sum_{\lambda_{k}=\lambda} \mathbf{u}_{k}\mathbf{v}_{k}^\top.$$

\begin{definition}[wide-sense definition of a graph filter]
\label{def:wide-sense}
 The most general definition of a graph filter is an operator that acts separately on all the eigenspaces of $\mathbf{R}$, depending on their eigenvalue $\lambda$. Mathematically, any function 
\begin{align}
\label{chapfilt:eq:h}
 h:
 \mathbb{C} &\rightarrow \mathbb{R}\\
 \lambda&\rightarrow h(\lambda).
\end{align}
defines a graph filter $\mathbf{H}$ such that
\begin{align}
\label{eq:wide-sense-filter}
 \mathbf{H} = \sum_{\lambda} h(\lambda)~\text{\textbf{Pr}}_{\lambda}= \sum_{k} h(\lambda_k) \mathbf{u}_{k}\mathbf{v}_{k}^\top.
\end{align}
\end{definition}

The decomposition of the filter on each eigenspace of $\mathbf{R}$ with eigenvalue $\lambda$ is associated to a filtering weight $h(\lambda)$ that attenuates or increases the importance of this eigenspace in the decomposition of the signal of interest. In fact, one may write the action of $\mathbf{H}$ on a graph signal $\mathbf{x}$ as:
 \begin{align}
  \mathbf{Hx} = \sum_{\lambda} h(\lambda)~\text{\textbf{Pr}}_{\lambda} \;\mathbf{x}.
 \end{align}
%
For any function $g$, let us write $g(\mathbf{\Lambda})$ as a shorthand notation for $\text{diag}(g(\lambda_1)\ldots,g(\lambda_N))$. A graph filter can be written as:
\begin{align}
 \mathbf{H} = \mathbf{U}\,h(\mathbf{\Lambda})\,\mathbf{U}^{-1}.
\end{align}
Using functional calculus of operators, this is equivalently written as $\mathbf{H} = h(\mathbf{R})$, which calls for some interpretation remarks. In fact, this expression opens the way to interpret what is the action of a graph filter in the vertex domain. Firstly, note that applying $\mathbf{R}$ to a graph signal is in fact a \emph{local} computation on the graph: on each node, the resulting transformed signal is a weighted sum of the values of the original signal on its (direct) neighbours. 
Therefore, in the node space, a filter $\mathbf{H}=h(\mathbf{R})$ can be interpreted as an operator that weights the information on the signal transmitted through edges of the graph, the same way classical filters are built on the basic operation of time-shift. This fundamental analogy shift/reference operator was first made in~\cite{sandryhaila_TSP2013} with the adjacency matrix and was particularly pertinent in that case. Yet, in the GSP literature, some authors extended this analogy to all reference operators, using sometimes the generic term ``graph shift operators" for their designation. 



Now, to illustrate the notion of graph filtering in the spectral domain, consider the graph signal $\mathbf{x}=\sum_k \alpha_k \mathbf{u}_k$
and $\mathbf{y} = \mathbf{Hx} $. The $k$-th Fourier component of $\mathbf{x}$ being, by construction, $\hat{\mathbf{x}}_k = \mathbf{v}_k^\top \mathbf{x} = \alpha_k$, its filtered version reads:
\begin{align}
\mathbf{y} =  \mathbf{Hx} =  \sum_{k}  h(\lambda_k)\alpha_k \mathbf{u}_k.
\end{align}
Hence the $k$-th Fourier component of the filtered signal is $\hat{\mathbf{y}}_k  = h(\lambda_k)\,\alpha_k$, recalling the classical interpretation of filtering as a multiplication in the Fourier domain. 
We call $h(\lambda)$ the frequency response of the filter.

In many cases, it is more convenient and natural to restrict ourselves to functions $h$ that associate the same real number to all values of $\lambda\in\{\lambda \text{ s.t. } \nu(\lambda)=\nu\in\mathbb{R}^+\}$. That is, considering any two eigenvalues $\lambda_1$ and $\lambda_2$ associated to the same frequency $\nu(\lambda_1)=\nu(\lambda_2)$, we restrict ourselves to functions $h$ such that $h(\lambda_1)=h(\lambda_2)$. This entails the following narrowed-sense definition of a graph filter.

\begin{definition}[narrowed-sense definition of a graph filter]
\label{def:narrow-sense}
 Any function 
\begin{align}
\label{chapfilt:eq:h}
 h:
 \mathbb{R}^+ &\rightarrow \mathbb{R}\\
 \nu&\rightarrow h(\nu).
\end{align}
defines a narrowed-sense graph filter $\mathbf{H}$ such that
\begin{align}
 \mathbf{H} = \sum_{\lambda} h(\nu(\lambda))~\text{\textbf{Pr}}_{\lambda} = \mathbf{U}h(v(\mathbf{\Lambda}))\mathbf{U}^{-1}.
\end{align}
\end{definition}

\noindent\textbf{Remark.} If the eigenvalues are real, both definitions \ref{def:wide-sense} and \ref{def:narrow-sense} are equivalent since $\nu(\lambda)=\lambda\in\mathbb{R}^+$. 
\\



\noindent\textbf{Examples of narrowed-sense filters:}
\begin{itemize}
 \item The constant filter equal to $c$: $h(\nu) = c$. In this case, $h(\nu(\mathbf{\Lambda})) = c\mathbf{I}$ and $\mathbf{H} = c\mathbf{I}$: all frequencies are allowed to pass, and no component is filtered out. 
 \item The Kronecker delta in $\nu^*$: $h(\nu) = \delta_{\nu, \nu^*}$. If there exists one (or several) eigenvalues  $\lambda$ of $\mathbf{R}$ such that $\nu(\lambda)=\nu^*$, then $\mathbf{H} = \sum_{\lambda \text{ s.t. }\nu(\lambda)=\nu^*}\text{\textbf{Pr}}_{\lambda}$. If not, then $\mathbf{H} = \mathbf{0}$. For this filter, only the frequency $\nu^*$ is allowed to pass. 
 \item The ideal low-pass with cut-off frequency $\nu_c$: $h(\nu) = 1$ if $\nu\leq \nu_c$, and $0$ otherwise. In this case:
 $\mathbf{H} = \sum_{\lambda, \text{s.t.} \nu(\lambda)\leq \nu_c} \text{\textbf{Pr}}_\lambda$, \textit{i.e.}, only frequencies up to $\nu_c$ are allowed to pass.
 \item The heat kernel $\displaystyle h(\nu)=\exp^{-\nu/\nu_0}$: the weight associated to $\nu$ is exponentially  decreasing with the frequency $\nu$. Actually $\mathbf{y} =\mathbf{H}\,\mathbf{x}_0$ is the solution of the graph diffusion (or heat) equation (see \cite{shuman_SPMAG2013}) at time $t=1/\nu_0$ with initial condition $\mathbf{x}_0$. 
\end{itemize}

\subsection{Properties of graph filters}
\label{sec:prop_graph}
From now on, in order to simplify notations and concepts in this introductory chapter on graph filtering, we will restrict ourselves to symmetric reference operators, such as $\mathbf{L}$, $\mathbf{L_n}$, or $\mathbf{L_d}$ in the undirected case, or the directed Laplacians $\mathbf{Q}$ or $\mathbf{Q_n}$ in the directed case. In this case, the eigenvalues are real such that the frequency definition is straightforward $\nu(\lambda)=\lambda$, both filter definitions are equivalent such that the frequency response reads
\begin{align}
\label{chapfilt:eq:h}
 h:
 \mathbb{R}^+ &\rightarrow \mathbb{R}\\
 \lambda&\rightarrow h(\lambda),
\end{align}
and one may find a real orthonormal graph Fourier basis $\mathbf{U}$ such that $\mathbf{U}^{-1}=\mathbf{U}^\top$. A filtering operator associated to $h$ thereby reads:
\begin{align}
 \mathbf{H} = \mathbf{U} h(\mathbf{\Lambda})\mathbf{U}^\top\in\mathbb{R}^{N\times N}.
\end{align}
All results presented in the following may be (carefully) generalized to the unsymmetric case. 

\begin{definition}
 We write $\mathcal{C}_p$ the set of finite-order polynomials in $\mathbf{R}$:
 \begin{align}
\label{chapfilt:eq:filter_to_poly}
 \mathcal{C}_p = \left\{\mathbf{H} \text{ s.t. } \mathbf{H} = \sum_{i=0}^{n} a_i \mathbf{R}^i, \: \{a_i\}_{i=0,\ldots,n}\in\mathbb{R}^{n+1}, n\in\mathbb{N}\backslash\{+\infty\}\right\}.
\end{align}
\end{definition}

\begin{proposition}
$\mathcal{C}_p$ is equal to the set of graph filters. 
\end{proposition}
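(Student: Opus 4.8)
The plan is to prove the two inclusions $\mathcal{C}_p \subseteq \{\text{graph filters}\}$ and $\{\text{graph filters}\} \subseteq \mathcal{C}_p$ separately. Throughout I would work in the symmetric setting fixed just above the statement, so that a graph filter is precisely a matrix of the form $\mathbf{H} = \mathbf{U}\, h(\mathbf{\Lambda})\, \mathbf{U}^\top$ for some admissible frequency response $h \colon \mathbb{R}^+ \rightarrow \mathbb{R}$, and the spectral decomposition $\mathbf{R} = \mathbf{U}\mathbf{\Lambda}\mathbf{U}^\top$ with $\mathbf{U}$ orthonormal is available.

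The first inclusion is immediate. Given $\mathbf{H} = \sum_{i=0}^n a_i \mathbf{R}^i \in \mathcal{C}_p$, I would substitute $\mathbf{R}^i = \mathbf{U}\mathbf{\Lambda}^i\mathbf{U}^\top$ (which follows from $\mathbf{U}^\top\mathbf{U} = \mathbf{I}$) to obtain $\mathbf{H} = \mathbf{U}\big(\sum_i a_i \mathbf{\Lambda}^i\big)\mathbf{U}^\top = \mathbf{U}\, p(\mathbf{\Lambda})\, \mathbf{U}^\top$, where $p(\lambda) = \sum_i a_i \lambda^i$. Since $p$ is a real polynomial, $h = p$ is an admissible frequency response and $\mathbf{H}$ is a graph filter.

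The reverse inclusion is the crux. Starting from an arbitrary graph filter $\mathbf{H} = \mathbf{U}\, h(\mathbf{\Lambda})\, \mathbf{U}^\top$, I must produce a finite-degree real polynomial $p$ with $p(\mathbf{R}) = \mathbf{H}$, that is $p(\lambda_k) = h(\lambda_k)$ for every $k$. The key observation is that $\mathbf{R}$ has only finitely many distinct eigenvalues, say $\mu_1, \ldots, \mu_m$ with $m \leq N$ (real and nonnegative in the symmetric case). I would then invoke Lagrange interpolation: there exists a real polynomial $p$ of degree at most $m-1$ with $p(\mu_j) = h(\mu_j)$ for each $j$. Because every $\lambda_k$ coincides with some $\mu_j$, this forces $p(\lambda_k) = h(\lambda_k)$ for all $k$, hence $p(\mathbf{\Lambda}) = h(\mathbf{\Lambda})$ and $p(\mathbf{R}) = \mathbf{U}\, p(\mathbf{\Lambda})\, \mathbf{U}^\top = \mathbf{H}$, placing $\mathbf{H}$ in $\mathcal{C}_p$.

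The main obstacle to watch is precisely this reverse inclusion: one needs the narrowed-sense viewpoint to guarantee that $h$ is a genuine function of the eigenvalue, so that eigenvalues of multiplicity greater than one impose a single interpolation value and the Lagrange construction is consistent; and one must verify that interpolating at the \emph{distinct} eigenvalues already reproduces $\mathbf{H}$ on the entire spectrum. I would also remark that the resulting degree bound $m-1 \leq N-1$ shows that the finite-order restriction ($n \neq +\infty$) in the definition of $\mathcal{C}_p$ is never an obstruction, which closes the argument.
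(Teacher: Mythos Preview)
Your proposal is correct and follows essentially the same approach as the paper: both directions are handled exactly as the paper does, with the forward inclusion by diagonalizing $\mathbf{R}^i$ and the reverse inclusion by polynomial interpolation on the (finitely many) eigenvalues to match $h(\lambda_k)$, yielding a polynomial of degree at most $N-1$. Your version is in fact slightly more careful than the paper's, since you explicitly interpolate on the \emph{distinct} eigenvalues and point out that the narrowed-sense definition is what guarantees a single target value at each repeated eigenvalue.
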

\begin{proof}
Consider $\mathbf{H}\in\mathcal{C}_p$. Then, defining $h(\lambda)= \sum_{i=0}^{n} a_i \lambda^i$, one has: $\mathbf{H} = \sum_{i=0}^{n} a_i \mathbf{R}^i = \mathbf{U}\sum_{i=0}^{n} a_i \mathbf{\Lambda}^i\mathbf{U}^\top = \mathbf{U}h(\mathbf{\Lambda})\mathbf{U}^\top$, \textit{i.e.}, $\mathbf{H}$ is a filter. Now, consider $\mathbf{H}$ a filter, \textit{i.e.}, there exists $h$ such that $\mathbf{H}=\mathbf{U}h(\mathbf{\Lambda})\mathbf{U}^\top$. Consider the polynomial $\sum_{i=0}^{N-1} a_i \lambda^i$ that interpolates through all pairs $(\lambda_i, h(\lambda_i))$. The maximum degree of such a polynomial is $N-1$ as there are maximum $N$ points to interpolate, and may be smaller if eigenvalues have multiplicity larger than one. Thereby, one may write:
$\mathbf{H}=\mathbf{U}h(\mathbf{\Lambda})\mathbf{U}^\top=\mathbf{U}\sum_{i=0}^{N-1} a_i \mathbf{\Lambda}^i\mathbf{U}^\top=\sum_{i=0}^{N-1} a_i \mathbf{R}^i$. Writing $n=N-1$, this means that $\mathbf{H}\in\mathcal{C}_p$. 
\end{proof}

\noindent\textbf{Consequence:} An equivalent definition of a graph filter is a polynomial in $\mathbf{R}$ of maximal degree $N-1$. 

\begin{definition}
 We write $\mathcal{C}_d$ the set of all diagonal operators in the graph Fourier space:
\begin{align}
 \mathcal{C}_d = \{\mathbf{H}, \text{ s.t. } \mathbf{U}^{\top}\mathbf{HU} \text{ is diagonal}\}.
\end{align}
\end{definition}

\begin{proposition} 
\label{chapfilt:prop:C_d}The set of graph filters is included in $\mathcal{C}_d$. Both sets are equal iff  all eigenspaces of $\mathbf{R}$ are of dimension one (\textit{i.e.}, all eigenvalues are of multiplicity one).
\end{proposition}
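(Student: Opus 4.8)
The plan is to decompose the statement into the one-sided inclusion and the equality characterization, with the entire argument hinging on a single observation: a graph filter $\mathbf{H} = \mathbf{U} h(\mathbf{\Lambda}) \mathbf{U}^\top$ assigns to each Fourier mode $\mathbf{u}_k$ the gain $h(\lambda_k)$, so it is \emph{forced} to give the same gain to any two modes sharing an eigenvalue. Membership in $\mathcal{C}_d$, by contrast, only demands that $\mathbf{U}^\top \mathbf{H} \mathbf{U}$ be diagonal, which permits arbitrary and independent gains on each mode. The discrepancy between ``one gain per eigenvalue'' and ``one gain per eigenvector'' is exactly governed by the eigenvalue multiplicities.

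First I would establish the inclusion. Given any filter $\mathbf{H} = \mathbf{U} h(\mathbf{\Lambda}) \mathbf{U}^\top$, the orthonormality $\mathbf{U}^\top \mathbf{U} = \mathbf{I}$ yields $\mathbf{U}^\top \mathbf{H} \mathbf{U} = h(\mathbf{\Lambda}) = \text{diag}(h(\lambda_1), \ldots, h(\lambda_N))$, which is diagonal. Hence $\mathbf{H} \in \mathcal{C}_d$, so the set of filters is contained in $\mathcal{C}_d$ regardless of any spectral assumption.

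For the equality, I would first treat the ``if'' direction: assume every eigenvalue has multiplicity one, so that $k \mapsto \lambda_k$ is injective. Given any $\mathbf{H} \in \mathcal{C}_d$, write $\mathbf{U}^\top \mathbf{H} \mathbf{U} = \text{diag}(d_1, \ldots, d_N)$. Since the $\lambda_k$ are pairwise distinct, the prescription $h(\lambda_k) = d_k$ defines a function $h$ with no inconsistency (one may equivalently invoke the interpolating polynomial through the pairs $(\lambda_k, d_k)$, as in the preceding proposition). Then $h(\mathbf{\Lambda}) = \text{diag}(d_1, \ldots, d_N)$ and $\mathbf{H} = \mathbf{U} h(\mathbf{\Lambda}) \mathbf{U}^\top$ is a filter; combined with the inclusion, this gives equality. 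For the ``only if'' direction I would argue by contraposition: suppose some eigenvalue $\lambda$ has multiplicity at least two, say $\lambda_i = \lambda_j = \lambda$ with $i \neq j$. Choosing a diagonal matrix with $d_i \neq d_j$ (for instance $d_i = 1$, $d_j = 0$, the remaining entries arbitrary) and setting $\mathbf{H} = \mathbf{U} \text{diag}(d_1, \ldots, d_N) \mathbf{U}^\top$ produces an element of $\mathcal{C}_d$ that is \emph{not} a filter, since any filter would require $h(\lambda_i) = h(\lambda_j)$, contradicting $d_i \neq d_j$.

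The only delicate point — and the main thing to get right — is well-definedness of $h$ in the ``if'' direction: the assignment $\lambda_k \mapsto d_k$ is legitimate precisely because spectral simplicity prevents the arguments from colliding. This is the exact place where the multiplicity hypothesis is consumed, and it is the mirror image of the degeneracy obstruction exploited in the contrapositive.
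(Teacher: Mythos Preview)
Your proof is correct and follows essentially the same approach as the paper: both hinge on the observation that a graph filter must assign the same diagonal entry to all modes sharing an eigenvalue, whereas membership in $\mathcal{C}_d$ allows the diagonal entries of $\mathbf{U}^\top\mathbf{H}\mathbf{U}$ to be chosen independently. Your version is more explicit --- you separate the ``if'' and ``only if'' directions and exhibit a concrete counterexample in the degenerate case --- while the paper compresses the biconditional into a single sentence, but the underlying idea is identical.
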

\begin{proof}
By definition of graph filters, they are included in $\mathcal{C}_d$. 
Now, in general, an element of $\mathcal{C}_d$ is not necessarily a graph filter. In fact, given  $\mathbf{H}\in\mathcal{C}_d$, all diagonal entries of $\mathbf{U}^\top\mathbf{HU}$ may be chosen independently, which is not the case for the diagonal entries of $h(\mathbf{\Lambda})$ corresponding to the same eigenspace. Thus, both sets are equal iff all eigenspaces are of dimension one. 
 \end{proof}
 
\noindent\textbf{Consequence:} In the case where all eigenvalues are simple, an equivalent definition of a graph filter is a diagonal matrix in the graph Fourier basis.

\begin{definition}
We write $\mathcal{C}_c$ the set of matrices that commute with $\mathbf{R}$:
\begin{align}
 \mathcal{C}_c = \{\mathbf{H} \text{ s.t. } \mathbf{RH} = \mathbf{HR}\}.
\end{align}
\end{definition}
\begin{proposition}
 $\mathcal{C}_p\subseteq \mathcal{C}_c$. The equality holds iff all eigenvalues of $\mathbf{R}$ are simple. 
\end{proposition}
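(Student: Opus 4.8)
The plan is to prove the inclusion and the equivalence separately, pushing everything into the graph Fourier basis where $\mathbf{R}$ becomes the diagonal matrix $\mathbf{\Lambda}$. The inclusion $\mathcal{C}_p\subseteq\mathcal{C}_c$ is immediate: each power satisfies $\mathbf{R}\,\mathbf{R}^i=\mathbf{R}^{i+1}=\mathbf{R}^i\,\mathbf{R}$, so any combination $\sum_i a_i\mathbf{R}^i$ commutes with $\mathbf{R}$.

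For the equivalence I would first obtain a concrete description of $\mathcal{C}_c$. Writing $\tilde{\mathbf{H}}=\mathbf{U}^\top\mathbf{H}\mathbf{U}$ and using $\mathbf{R}=\mathbf{U}\mathbf{\Lambda}\mathbf{U}^\top$ with $\mathbf{U}$ orthonormal, the relation $\mathbf{RH}=\mathbf{HR}$ is equivalent to $\mathbf{\Lambda}\tilde{\mathbf{H}}=\tilde{\mathbf{H}}\mathbf{\Lambda}$, that is to $(\lambda_i-\lambda_j)\tilde{H}_{ij}=0$ for all $i,j$. Hence $\mathbf{H}\in\mathcal{C}_c$ iff $\tilde{H}_{ij}=0$ whenever $\lambda_i\neq\lambda_j$, i.e. iff $\tilde{\mathbf{H}}$ is block-diagonal, with one block per distinct eigenvalue (of size equal to that eigenvalue's multiplicity) and arbitrary entries inside each block. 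By contrast, the graph filters in $\mathcal{C}_p$ correspond exactly to $\tilde{\mathbf{H}}=h(\mathbf{\Lambda})$: diagonal, and constant (equal to $h(\lambda)$) on each such block.

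From this comparison both directions follow. If all eigenvalues are simple, every block has size one, so ``block-diagonal'' just means ``diagonal'' and the constant-on-each-block requirement is vacuous; thus $\mathcal{C}_c$ coincides with the set of diagonal matrices in the Fourier basis, which by Proposition~\ref{chapfilt:prop:C_d} is precisely the set of graph filters $=\mathcal{C}_p$, giving equality. Conversely, if some eigenvalue $\lambda$ has multiplicity $m\geq 2$, I would exhibit a witness in $\mathcal{C}_c\setminus\mathcal{C}_p$: take $\tilde{\mathbf{H}}$ equal to the identity outside the $\lambda$-block and equal to any non-scalar $m\times m$ matrix $\mathbf{B}$ on that block, and set $\mathbf{H}=\mathbf{U}\tilde{\mathbf{H}}\mathbf{U}^\top$. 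It lies in $\mathcal{C}_c$ by the characterization, yet it is not a graph filter, since a filter must act as the single scalar $h(\lambda)$ on the whole $\lambda$-eigenspace while $\mathbf{B}$ is not scalar. This strict inclusion settles the ``only if'' direction.

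The one step needing a little care is the concrete characterization of $\mathcal{C}_c$, namely reading off that commuting with $\mathbf{R}$ forces the block-diagonal structure aligned with the eigenspaces, and that graph filters are exactly the scalar-per-block elements of that family. Once this dictionary between the commutant and block-diagonal matrices is established, both implications reduce to comparing ``arbitrary block'' against ``scalar block'', and the rest is routine.
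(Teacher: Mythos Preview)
Your proof is correct and follows essentially the same route as the paper: both pass to the Fourier basis, rewrite $\mathbf{RH}=\mathbf{HR}$ as $\mathbf{\Lambda}\tilde{\mathbf{H}}=\tilde{\mathbf{H}}\mathbf{\Lambda}$, and invoke Proposition~\ref{chapfilt:prop:C_d} to close the ``if'' direction. Your treatment of the ``only if'' direction is in fact slightly more thorough than the paper's, which only illustrates the failure with the extreme case $\mathbf{\Lambda}=\lambda\mathbf{I}$, whereas you build an explicit non-scalar-block witness for any eigenvalue of multiplicity $\geq 2$.
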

\begin{proof}
A polynomial in $\mathbf{R}$ necessarily commutes with $\mathbf{R}$, thus  $\mathcal{C}_p\subseteq \mathcal{C}_c$. Now, in general, an element in $\mathcal{C}_c$ is not necessarily in $\mathcal{C}_p$. In fact, if  $\mathbf{H}\in\mathcal{C}_c$, then $\mathbf{\Lambda}\mathbf{Y} = \mathbf{Y}\mathbf{\Lambda}$ with $\mathbf{Y}=\mathbf{U}^\top\mathbf{HU}$. Suppose $\mathbf{\Lambda}=\lambda\mathbf{I}$. In this case, commutativity does not constrain $\mathbf{Y}$ at all, thereby $\mathbf{H}$ is not necessarily in $\mathcal{C}_p$. Now, if we suppose that all eigenvalues have multiplicity one, \textit{i.e.}, all diagonal entries of $\mathbf{\Lambda}$ are different, the only solution for $\mathbf{Y}$ is to be a diagonal matrix, \textit{i.e.}, $\mathcal{C}_c=\mathcal{C}_d$. We showed in Prop.~\ref{chapfilt:prop:C_d} that the set of graph filters is equal to $\mathcal{C}_d$ iff all eigenvalues have multiplicity one, therefore:  $\mathcal{C}_c=\mathcal{C}_p$ iff all eigenvalues have multiplicity one.
\end{proof}

\noindent\textbf{Consequence:} In the case where all eigenvalues are simple, an equivalent definition of a graph filter is a linear operator that commutes with $\mathbf{R}$. 
\\



\subsection{Some designs of graph filters}

%
%

From the previous sections, it should now be clear that the frequency response of a filter $\mathbf{H}$ only alters the frequencies $\nu(\lambda)$ corresponding to the discrete set of eigenvalues of $\mathbf{R}$ (see Eq. (\ref{eq:wide-sense-filter})). More generally though, the frequency response of a graph filter can be defined over a continuous range of $\lambda$'s, leading to the notion of \textit{universal filter} design, i.e. a filter whose frequency response $h(\lambda)$ is designed for all $\lambda$'s  and not only adapted to the specific eigenvalues of $\mathbf{R}$. On the contrary, a \textit{graph-dependent filter} design depends specifically on these eigenvalues.
\medskip


\noindent \textbf{FIR filters.}
From the results of Section \ref{sec:prop_graph}, a natural class of graph filters is given in the form of Finite Impulse Response (FIR) filters, 
as by eq.~(\ref{chapfilt:eq:filter_to_poly}) with a polynomial of finite order, which realizes a weighted Moving Average (MA) filtering of a signal. Also, one can design any universal filter by fitting the desired response $h(\lambda)$ with a polynomial $\sum_{i=0}^n a_i \lambda^i$. The larger $n$ is, the closer the filter can approximate the desired shape. If the approximation is done only using the $h(\lambda_k)$, the design is graph-dependent, else it is universal if fitting some function function  $h(\lambda)$.

Let us then go back to the interpretation of $\mathbf{R}$ as a graph shift operator  in the node space and see how it operates for FIR filters. 
Applied to a graph signal, the terms $\mathbf{R}^i$ in eq.~(\ref{chapfilt:eq:filter_to_poly}), act as a $i$-hops local computation on the graph: on each node, the resulting filtered signal is a weighted sum of the values of the original signal lying in its $i$-th neighbourhood, that is, nodes attainable with a path of length $i$ along the graph. Then, like for classical signals, FIR filters only imply a finite neighbourhood of each nodes, and this will translate in Section \ref{ssec:chapfilt:ImplementationGFilt} into distributed, fast implementations of these filters. 
Still, FIR filters are usually poor at approximating filters with sharp changes of desired frequency response, as illustrated for instance in Fig. 1 of~\cite{isufi_autoregressive_2017}. 
\medskip

\noindent \textbf{ARMA filters.}
A more accurate and approximation of $h(\lambda)$ can be obtained with a rational design~\cite{isufi_autoregressive_2017,loukas_ARMA_2015,shi_infinite_2015}:
\begin{equation}
\label{chapfilt:eq:ARMAfilter}
h(\lambda) = \frac{ \sum_{i=0}^q b_i \lambda^i }{ 1 +  \sum_{i=1}^p a_i \lambda^i} = \frac{p_q(\lambda)}{p_p(\lambda)}.
\end{equation} 
Such a rational filter is called an Auto-Regressive Moving Average filter of order $(p,q)$ and is commonly noted ARMA$(p,q)$. 
Again, it is known from classical DSP that an ARMA design, being a IIR (Infinite Impulse Response) filter, is more versatile at approximating various
shapes of filters, especially with sharp changes in the frequency response. 
The filtering relation $\mathbf{y} = \mathbf{H} \mathbf{x}$ for ARMA filters can be written in the node domain as:
$  (1 +  \sum_{i=1}^p a_i \mathbf{R}^i) \mathbf{y} = (\sum_{i=0}^q b_i \mathbf{R}^i) \mathbf{x}$.
This ARMA filter expression will lead to the distributed implementation, discussed later in \ref{ssec:chapfilt:ImplementationGFilt}. For instance, for an ARMA(1,0) (i.e., an AR(1))  one will have to use:
$\mathbf{y}  = -a_1 \mathbf{R} \mathbf{y} + b_0 \mathbf{x}$.
\medskip

\noindent \textbf{Example:} A first design of low-pass graph filtering is given by the simplest least-square denoising problem, where the the Dirichlet form $\mathbf{x}^T \mathbf{R} \mathbf{x}$ is used as Tikhonov regularization promoting smoothness on the graph. Using the (undirected) Laplacian and assuming one observes $\mathbf{y}$, the filter is given by;
\begin{equation}
\label{eq:chapfilt:tikhonov}
\mathbf{x_*} = \mbox{arg}\min_{\mathbf{x}} || \mathbf{x}-\mathbf{y} ||_2^2 + \gamma \mathbf{x}^{\top} \mathbf{L} \mathbf{x}.
\end{equation} 
The solution is then given in the spectral domain (for $\mathbf{L}$) by $(\mathbb{F} \mathbf{x_*})_k = h_{AR(1)}(\lambda_k) (\mathbb{F} \mathbf{y})_k$
with $h_{AR(1)}(\lambda) = {1}/{(1+\gamma \lambda)}$. It turns out to be a (universal) AR(1) filter.
\medskip

\noindent \textbf{Design of coefficients.}
To design the coefficients of ARMA filters, the classical approach is to find the set of coefficients $a_i$ and $b_i$ to approximate the desired $h(\lambda)$ as a rational function. 
However, as recalled in \cite{isufi_autoregressive_2017,loukas_ARMA_2015}, the usual design in DSP are not easily transposed
to the GSP framework because the frequency response is given in terms of the $\lambda$'s, and not in terms 
of $j\omega$ or $e^{j\omega}$. 
 
Henceforth, it has been studied in \cite{isufi_autoregressive_2017,loukas_ARMA_2015} how to approximate the filter coefficients in a universal manner (i.e., with no specific reference to the graph spectrum) using a Shank's method: 
1) Determine the $a_i$ by first finding a polynomial approximation $P_h(\lambda)$ of $h(\lambda)$,
and solve the system of equations $p_p(\lambda) P_h(\lambda) = p_q(\lambda)$ to identify the $a_i$'s.
Then 2) solve the least-square problem to minimize $\int_\lambda | p_q(\lambda)/ p_p(\lambda) - h(\lambda) |^2 d\lambda$ w.r.t. $\lambda$ to find the $b_i$'s.

A second method is to approximate the filter response in a graph-dependent design, on the specific frequencies $\lambda_k$ only. To do so, the method in \cite{Liu_Isufi_Leus_2016}, instead of fitting the polynomial ratio,  solves the following optimization problem: 
\begin{equation}
\min_{\mathbf{a},\mathbf{b}} \sum_{k=0}^{N-1} \left | h(\lambda_k)  \left(1 +  \sum_{i=1}^p a_i \lambda_k^i\right) - \sum_{i=0}^q b_i \lambda_k^i \right|^2.
\end{equation} 
Using again a polynomial approximation $P_h(\lambda)$, the solution derives from the least square solution (see details in \cite{Liu_Isufi_Leus_2016}).
\medskip
 
\noindent \textbf{AR filters to model random processes.}
We do not discuss much in this chapter random processes on graphs ; for that see the framework to study stationary random processes on graphs in \cite{Perraudin_TSP2017,Marques_TSP2017}. 
Still, a remark can be done for the parametric modeling of random processes. 
As introduced in \cite{sandryhaila_TSP2013}, one can model a process on a graph
as the output of a graph filter, generally taken as an ARMA filter. 
Here, we discuss the case of AR filters. The linear prediction is written as:
\begin{equation}
\tilde{\mathbf{x}}  = \sum_{i=1}^p a_i \mathbf{R}^i \mathbf{x}.
\end{equation} 
The coefficients of this filter can directly be obtained by numerical inversion of
$\mathbf{x} = \mathbf{B} \mathbf{a}$ where $\mathbf{B}=(\mathbf{R} \mathbf{x}, \mathbf{R}^2 \mathbf{x},\ldots,\mathbf{R}^p \mathbf{x})$.
The solution is given by the pseudo-inverse: $\mathbf{a}^\# = (\mathbf{B}^{\top} \mathbf{B})^{-1} \mathbf{B}^{\top} \mathbf{x}$
which minimizes the squared error $||\mathbf{x}-\mathbf{B}\mathbf{a}||^2_2$, w.r.t to $\mathbf{a}$.
Another possibility would be to estimate the coefficients of the AR model by means of the orthogonality principle leading to the Yule-Walker (YW) equations, as follows:
\begin{equation}
	\mathbb{E}\left\{(\mathbf{R}^k \mathbf{x})^\top \mathbf{R}\mathbf{x})\right\}= \sum_{i=1}^p a_i   \mathbb{E}\left\{ (\mathbf{R}^i \mathbf{x})^\top \mathbf{R}^k \mathbf{x} \right\}=0,
\end{equation}
Depending to the structure of the reference operator $\mathbf{R}$, we have:
        \begin{enumerate}
    \item If $\mathbf{R}$ is symmetric ($\mathbf{R}^\top=\mathbf{R}$, as often for undirected graph), the autocorrelation function involved in this YW system is: $\rho_{\mathbf{x}}(m,n)= \mathbb{E}\left\{ (\mathbf{R}^m \mathbf{x})^\top \mathbf{R}^n \mathbf{x} \right \}=\gamma_{\mathbf{x}}(m+n)$;
    \item When $\mathbf{R}$ is unitary (for instance with $\mathbf{R}$ as the isometric operator from \cite{Girault2015}), the corresponding autocorrelation will be 
$\rho_{\mathbf{x}}(m,n)= \mathbb{E}\left\{ (\mathbf{R}^m \mathbf{x})^\top \mathbf{R}^n \mathbf{x} \right \}=\gamma_{\mathbf{x}}(n-m)$ and the usual techniques to solve the YW system can be used. 
\end{enumerate}
Experimentally, it was found that the isometric operator of \cite{Girault2015} or a consensus operator are the one that offer a more exact and stable modeling \cite{BenAlaya:2016}.

\subsection{Implementations of graph filters}
\label{ssec:chapfilt:ImplementationGFilt}

Given a frequency response $h$, how to effectively implement the action of the corresponding graph filter  $\mathbf{H}$ on a graph signal $\mathbf{x}$? \\

\noindent\textbf{The direct approach.} It consists in first diagonalizing $\mathbf{R}$ to obtain $\mathbf{U}$ and $\mathbf{\Lambda}$; then computing the filter matrix $\mathbf{H}=\mathbf{U}h(\mathbf{\Lambda})\mathbf{U}^\top$; before finally left-multiplying the graph signal $\mathbf{x}$ by $\mathbf{H}$ to obtain its filtered version. The overall computational cost of this procedure is $\mathcal{O}(N^3)$ arithmetic operations due to the diagonalization and $\mathcal{O}(N^2)$ memory space as the graph Fourier transform $\mathbf{U}$ is in general dense.\\


\noindent\textbf{The polynomial approximate filtering approach.} More efficiently though, we can first, quickly estimate $\lambda_\text{min}$ and $\lambda_\text{max}$ (for instance via the power method) and second, look for a polynomial that best approximates $h(\lambda)$ on the whole interval $[\lambda_\text{min},\lambda_\text{max}]$. Let us call $\tilde{h}_i$ the coefficients of this approximate polynomial. We have:
\begin{align}
 \mathbf{Hx} = \mathbf{U}h(\mathbf{\Lambda})\mathbf{U}^\top\mathbf{x}\simeq \mathbf{U}\sum_{i=0}^{p} \tilde{h}_i \mathbf{\Lambda}^i\mathbf{U}^\top \mathbf{x}=\sum_{i=0}^{p} \tilde{h}_i \mathbf{R}^i \mathbf{x}.
\end{align}
The number of required arithmetic operations is $\mathcal{O}(p|\mathcal{E}|)$, where $p$ is a trade-off between precision and computational cost. Also, the easier is $h$ approachable by a low-order polynomial, the better the approximation. At the same time, the larger $p$ is, the more accurate the approximation is.  The authors of~\cite{shuman_DCOSS2011} recommend as a natural choice for the approximation polynomials, the Chebychev polynomials as they are known to be optimal in the $\infty$-norm sense.
In some circumstances, other choices can be preferred. For instance, when approximating the ideal low-pass,  Chebychev polynomials yield Gibbs oscillations around the cut-off frequency that turn penalising  for smooth filters. In that case, other choices are possible~\cite{tremblay_compressive_2016}, such as the Jackson-Chebychev polynomials that attenuate such unwanted oscillations. \\

\noindent\textbf{The Lanczos approximate filtering approach.} In~\cite{susnjara_accelerated_2015}, and based on works by~\cite{gallopoulos_efficient_1992}, authors propose an approximate filtering approach based on Lanczos iterations. Given a signal $\mathbf{x}$, the Lanczos algorithm computes an orthonormal basis $\mathbf{V}_p\in\mathbb{R}^{N\times p}$ of the Krylov subspace associated to $\mathbf{x}$: $K_p(\mathbf{\mathbf{L}},\mathbf{x})=\text{span}(\mathbf{x},\mathbf{\mathbf{L}x},\ldots,\mathbf{\mathbf{L}}^{p-1}\mathbf{x})$, as well as a small tridiagonal matrix $\mathbf{H}_p\in\mathbb{R}^{p\times p}$ such that: $\mathbf{V}_p^*\mathbf{\mathbf{L}}\mathbf{V}_p=\mathbf{H}_p$. The approximate filtering then reads:
\begin{align}
 \mathbf{Hx}\simeq ||\mathbf{x}||_2 \mathbf{V}_ph(\mathbf{H}_p)\bm{\delta}_1.
\end{align}
At fixed $p$, this approach has a typical complexity in $\mathcal{O}(p|\mathcal{E}|)$, possibly raised to $\mathcal{O}(p|\mathcal{E}|+Np^2)$ if a reorthonomalisation is needed to stabilize the algorithm; 
a cost that is comparable to that of the polynomial approximation approach. 
Theoretically, the quality of approximation is similar (see~\cite{susnjara_accelerated_2015} for details). In practice however, it has been observed that if the spectrum is regularly spaced, polynomial approximations should be prefered, while the Lanczos method has an edge over others in the case of irregularly-spaced spectra. This is understandable as Krylov subspaces are also used for diagonalisation purposes (see for instance, chapter 6 of~\cite{saad_numerical_2011}) and thus naturally adapt to the underlying spectrum.\\


\begin{figure}\centering
 \includegraphics[width=\textwidth]{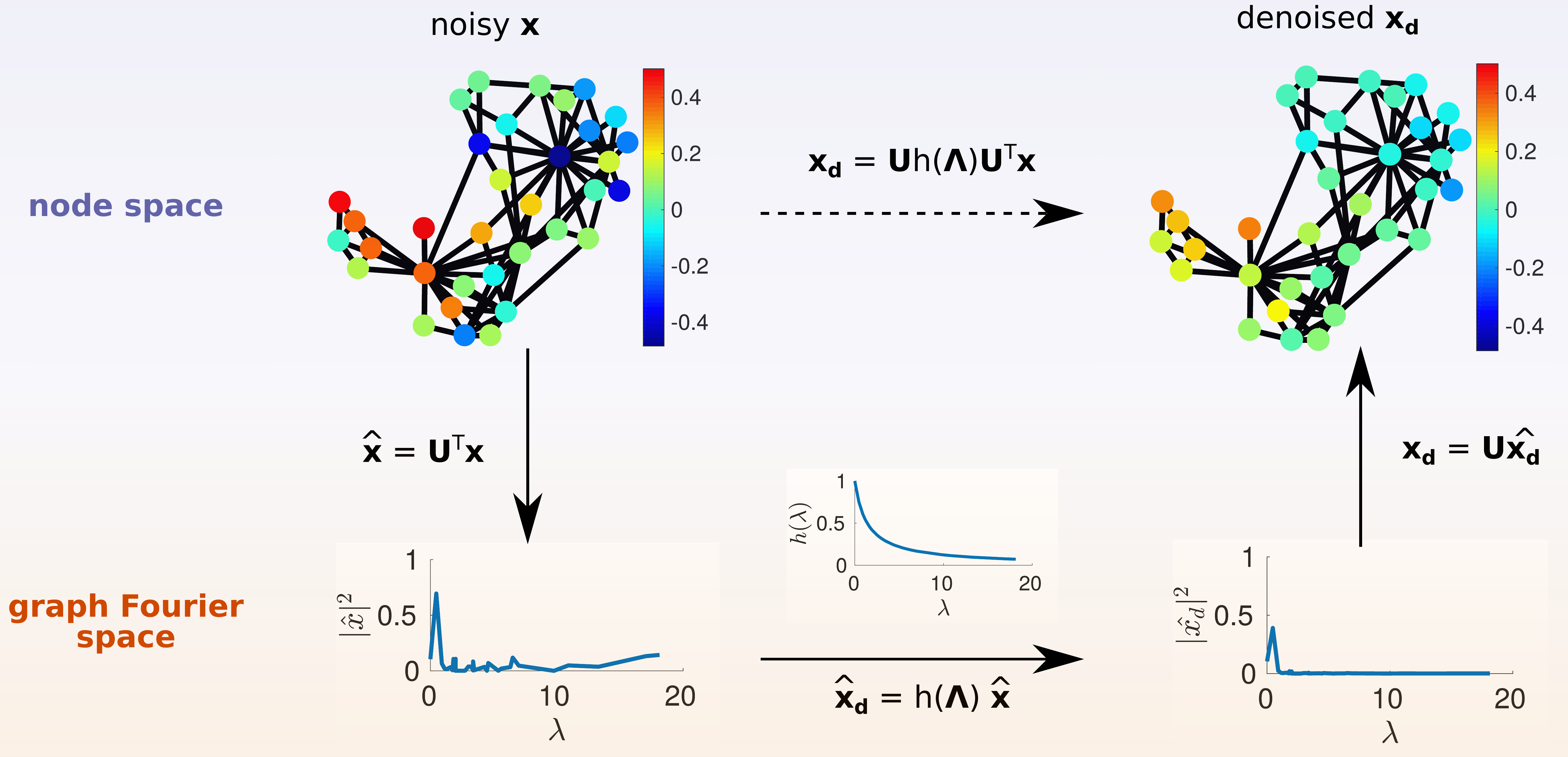}
 \caption{\label{chapfilt:fig:denoising} \textbf{Illustration of graph filters: a denoising toy experiment.} The  input signal $\mathbf{x}$ is a noisy version (additive Gaussian noise) of the low-frequency graph signal displayed in Fig.~\ref{chapfilt:fig:illus}. We show here the filtering operation in the graph Fourier domain associated to $\mathbf{R}=\mathbf{L_n}$. }
\end{figure}

\noindent \textbf{Distributed implementation of ARMA filters.}
The ARMA filters being defined through a rational fraction, are IIR filters. Henceforth, the polynomials approach to distribute and fasten the computation are not the most efficient ones. The methods developed in \cite{loukas_ARMA_2015,shi_infinite_2015} yield a distributed implementation of ARMA filters. The first point is to remember that the rational filter off eq.~(\ref{chapfilt:eq:ARMAfilter})
can be implemented from its partial fraction decomposition as a sum of polynomial fractions of order 1 only. Then, the distributed implementation of filtering $\mathbf{x}$ can be done by studying the 1st order recursion~\cite{loukas_ARMA_2015}:
\begin{equation}
\label{chapfilt:eq:ARMA1recursion}
\textbf{y}(t+1) = c \textbf{M}  \textbf{y}(t) + d \mathbf{x}
\end{equation}
where $\mathbf{y}(t+1)$ is the filter output at iteration $t$.
The operator $\textbf{M} $ is chosen equal to $(\lambda_{\text{max}} - \lambda_{\text{min}}) \mathbf{I} - \mathbf{R}$, with the same eigenvectors as $\mathbf{R}$, and with a minimal spectral radius that ensures good convergence properties for the recursion. 
The coefficient $c$ and $d$ are chosen in $\mathbb{C}$ so that, with $r = -d/c$ and $\rho=1/c$,
the proposed recursion reproduces the effect of the following ARMA(1,0) filter:
$ h(\lambda) = r / (\lambda - \rho)$. The coefficients $r$ and $\rho$ are the residue and the pole of the rational function, respectively. 

Because $\textbf{M}$ is local in the graph, the recursive application of eq.~(\ref{chapfilt:eq:ARMA1recursion}) is local and the algorithm is then naturally distributed on the graph, with a memory and operation at each recursion in $\mathcal{O}( K|\mathcal{E}|)$ for $K$ filters in parallel to compute the output of a ARMA$(K,K)$ filter.
This approach is shown in \cite{loukas_ARMA_2015,shi_infinite_2015} to converge efficiently. Moreover, in \cite{isufi_autoregressive_2017}, the behavior of this design is also studied in time-varying settings, when the graph and the signal are possibly time-varying; it is then shown that the recursion can remain stable and usable as a distributed implementation of IIR filters.
\\

\noindent\textbf{Illustration.} We show in Fig.~\ref{chapfilt:fig:denoising} an example of a filtering operation on a graph signal. We consider here the case of a Tikhonov denoising (see Eq.~\eqref{eq:chapfilt:tikhonov}), i.e., with a frequency response equal to $h(\lambda)=1/(1+\gamma\lambda)$. 

	\section{Filterbanks and multiscale transforms on graphs}
\label{chapfilt:sec3}

%
%



To process and filter signals on graphs, it is attractive to develop equivalent of multiscale
transforms such as wavelets, i.e. ways to decompose a graph signal on components at different scales,
or frequency ranges. 
The road to multiscale transforms on graphs has originally been tackled in the vertex domain~\cite{crovella_graph_2003} to analyse data on networks.
Thereafter, a general design of multiscale transforms was based on the diffusion of
signals on the graph structure, leading to the powerful framework of Diffusion Wavelets \cite{coifman_ACHA2006,gavish_multiscale_2010}. This latter works were already based on a diffusion
operators, usually a Laplacian or a random walk operator, whose powers are decomposed in order to obtain a multiscale orthonormal basis. The objective was to build a kind of equivalent of discrete
wavelet transforms for graph signals. 

In this chapter, we focus on two other constructions of multiscale transforms on graphs which are
more related to the Graph Fourier Transform. The frequency analysis and filters described 
here are:
1) the method of~\cite{hammond_ACHA2011} which develops an analog of continuous wavelet transform on graphs,
2) approaches that combine filters on graphs with graph decompositions through 
decimation (pioneered in \cite{narang_TSP2012} with decimation of bipartite graphs) 
or aggregation of nodes; in a nutshell, these methods are very close to the filter banks implementation of 
discrete wavelets \cite{strang_book1996}.

\subsection{Continuous multi-scale transforms}

In the following, we work with undirected graphs and $\mathbf{R}=\mathbf{L_n}=\mathbf{I}-\mathbf{D}^{-\frac{1}{2}}\mathbf{A}\mathbf{D}^{-\frac{1}{2}}$, whose eigenvalues are contained in the interval $[0, \, 2]$. The generalization to other operators can be done using the guidelines of previous sections.

The first continuous multiscale transform based on the graph Fourier transform was introduced via the spectral graph wavelet transform~\cite{hammond_ACHA2011} (and inherent of some properties of the concept of diffusion polynomial frames~\cite{maggioni_diffusion_2008}). These wavelets were defined by analogy to the classical wavelets in the following sense. Classically, a wavelet family $\{\psi_{s,\tau}(t)\}$ centered around time $\tau$ and at scale $s$ is the translated and dilated version of a mother wavelet $\psi(t)$, generally defined as a zero-mean, square integrable function. Mathematically it is expressed for all $s\in\mathbb{R}^+$ and $\tau\in\mathbb{R}$ as:
\begin{align}
\psi_{s,\tau}(t)=\frac{1}{s} \psi\left(\frac{t-\tau}{s}\right)\; 
\end{align}
or equivalently, in the frequency domain, with $\mathbb{F}$ the continuous Fourier transform:
\begin{align}
\label{chapfilt:eq:wavelet_in_Fourier}
\mathbb{F}[\psi_{s,\tau}](\omega) = \mathbb{F}[\psi](s\omega) \; \mathbb{F}[\delta_\tau](\omega) = \widehat{\psi}(s\omega) \; e^{-i\omega \tau}.
\end{align}
Then, for a signal $x$, the wavelet coefficient at scale $s$ and instant $\tau$ is given by the inner product
$\mathbb{W}_{s,\tau}x = \langle x, \psi_{s,\tau} \rangle$.

By analogy, transposing eq.~(\ref{chapfilt:eq:wavelet_in_Fourier}) with GFT, a spectral graph wavelet $\psi_{s,a}$ at scale $s$ and node $a$ reads\footnote{Note that the scale parameter stays continuous, but the localization parameter is discretized to the set of nodes $a$ of the graph.}:
\begin{align}
 \psi_{s,a} = \mathbf{U} \, h(s\Lambda) \, \mathbf{U}^\top \bm{\delta}_a,
\end{align}
where the graph filter $h(\lambda)$ plays the role of the wavelet bandpass filter $\widehat{\psi}(\omega)$. 
The shifted scaled wavelet identifies to the impulse response of $h(s\lambda)$ to a Dirac localized on node $a$.
In particular, the shape of the filter originally proposed in~\cite{hammond_ACHA2011} was: 
\begin{equation}
  h^{\textsc{SGW}}(\lambda)= 
  \left\{
  \begin{aligned}
        &\lambda_{*}^{-\alpha} \lambda^\alpha~\mbox{ for }  \lambda<\lambda_{1}\\
        &q(\lambda) ~\mbox{ for }  \lambda_{1}\leq \lambda\leq\lambda_{2}\\
        &\lambda_{2}^{\beta} \lambda^{-\beta}~\mbox{ for }  \lambda>\lambda_{2},
        \end{aligned}
    \right.\nonumber
\end{equation}
with $\alpha$, $\beta$, $\lambda_{1}$ and $\lambda_{2}$ four parameters, and $q(\lambda)$ the unique cubic polynomial interpolation that preserves continuity and the derivative's continuity. Several properties on the obtained wavelets may be theoretically derived, for instance the notion of locality (the fact that wavelets' energy is mostly contained around the node on which it is centered). Given a selection of scales $\mathcal{S}=(s_1,\ldots,s_m)$ and a graph signal $\mathbf{x}$, the signal wavelet coefficient associated to the node $a$ and the scale $s\in\mathcal{S}$ reads: $\mathbb{W}_{s,a}\mathbf{x}=\psi_{s,a}^\top\mathbf{x}$. Then, a question that naturally arises is that of invertibility of the wavelet transform: can one recover any signal $\mathbf{x}$ from its wavelet coefficients? As defined here, the wavelet transform is not invertible as it does not take into account --due to the zero-mean constraint of the wavelets-- the signal's information associated to the null frequency, i.e., associated to the first eigenvector $\mathbf{u}_1$. To enable invertibility, one may simply add any low-pass filter $h_0(\lambda)$ to the set of filters $\{h^{\textsc{SGW}}(s\lambda)\}_{s\in\mathcal{S}}$ of the wavelet transform. We write $\phi_{a}$ their associated atoms:
\begin{align}
 \phi_{a} = \mathbf{U}h_0(\mathbf{\Lambda})\mathbf{U}^\top\mathbf{\delta}_a.
\end{align}
The following theorem derives:
\begin{theorem}[Theorem 5.6 in~\cite{hammond_ACHA2011}] 
 Given a set of scales $\mathcal{S}$, the set of atoms $\{\{\psi_{s,a}\}_{s\in\mathcal{S}}\cup \{\phi_a\}\}_{a\in\mathcal{V}}$ forms a frame with bounds $A$, $B$ given by:
 \begin{align}
  A &= \min_{\lambda\in[0, \lambda_{\text{max}}]} G (\lambda)\\
  B &= \max_{\lambda\in[0, \lambda_{\text{max}}]} G (\lambda)
 \end{align}
where $G(\lambda) = (h_0(\lambda))^2 + \sum_{s\in\mathcal{S}} \; (h^{\textsc{SGW}}(s\lambda))^2$.
\end{theorem}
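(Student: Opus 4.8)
The plan is to compute the frame operator explicitly and read off its spectrum. Recall that a family $\{\mathbf{g}_i\}$ is a frame with bounds $A\le B$ precisely when $A\|\mathbf{x}\|_2^2 \le \sum_i |\langle \mathbf{x},\mathbf{g}_i\rangle|^2 \le B\|\mathbf{x}\|_2^2$ holds for every $\mathbf{x}\in\mathbb{R}^N$, which is equivalent to the operator inequality $A\mathbf{I}\preceq \mathbf{S}\preceq B\mathbf{I}$ for the frame operator $\mathbf{S}=\sum_i \mathbf{g}_i\mathbf{g}_i^\top$. So the whole statement reduces to diagonalising
\begin{align}
\mathbf{S} = \sum_{a\in\mathcal{V}} \phi_a\phi_a^\top + \sum_{s\in\mathcal{S}}\sum_{a\in\mathcal{V}} \psi_{s,a}\psi_{s,a}^\top.
\end{align}

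First I would handle one scale at a time. Since $\mathbf{R}=\mathbf{L_n}$ is symmetric, $\mathbf{U}$ may be taken orthonormal ($\mathbf{U}^\top\mathbf{U}=\mathbf{I}$), and $h(s\mathbf{\Lambda})$ is diagonal. Using the resolution of identity $\sum_{a\in\mathcal{V}}\bm{\delta}_a\bm{\delta}_a^\top=\mathbf{I}$ together with $\psi_{s,a}=\mathbf{U}\,h(s\mathbf{\Lambda})\,\mathbf{U}^\top\bm{\delta}_a$, the per-scale sum collapses:
\begin{align}
\sum_{a\in\mathcal{V}} \psi_{s,a}\psi_{s,a}^\top = \mathbf{U}\,h(s\mathbf{\Lambda})\,\mathbf{U}^\top\!\left(\sum_{a\in\mathcal{V}}\bm{\delta}_a\bm{\delta}_a^\top\right)\!\mathbf{U}\,h(s\mathbf{\Lambda})\,\mathbf{U}^\top = \mathbf{U}\,h(s\mathbf{\Lambda})^2\,\mathbf{U}^\top.
\end{align}
The same computation with $h_0$ in place of $h(s\cdot)$ gives $\sum_a \phi_a\phi_a^\top=\mathbf{U}\,h_0(\mathbf{\Lambda})^2\,\mathbf{U}^\top$. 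Summing these contributions over $\mathcal{S}$ and adding the low-pass term yields
\begin{align}
\mathbf{S} = \mathbf{U}\left( h_0(\mathbf{\Lambda})^2 + \sum_{s\in\mathcal{S}} h^{\textsc{SGW}}(s\mathbf{\Lambda})^2\right)\mathbf{U}^\top = \mathbf{U}\,G(\mathbf{\Lambda})\,\mathbf{U}^\top,
\end{align}
so $\mathbf{S}$ is itself a graph filter with frequency response $G$, diagonalised by $\mathbf{U}$ with eigenvalues $G(\lambda_1),\ldots,G(\lambda_N)$.

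From here the conclusion is immediate: the sharp frame bounds are $\min_k G(\lambda_k)$ and $\max_k G(\lambda_k)$, and since the eigenvalues all lie in $[0,\lambda_{\text{max}}]$, replacing these discrete extrema by the extrema of $G$ over the whole interval gives $A=\min_{\lambda\in[0,\lambda_{\text{max}}]}G(\lambda)\le \min_k G(\lambda_k)$ and $B=\max_{\lambda\in[0,\lambda_{\text{max}}]}G(\lambda)\ge \max_k G(\lambda_k)$, which remain valid (universal) frame bounds. I do not expect any genuine obstacle in this argument; the only point deserving care is strict positivity $A>0$, which is what guarantees a true frame (and invertibility of the transform). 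This is exactly the role of the added low-pass $h_0$: it ensures $G(\lambda)>0$ near the null frequency where the zero-mean wavelets $h^{\textsc{SGW}}(s\lambda)$ vanish, so that $G$ stays bounded away from $0$ on all of $[0,\lambda_{\text{max}}]$.
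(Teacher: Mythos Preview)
The paper does not actually prove this theorem: it is merely quoted from~\cite{hammond_ACHA2011} and immediately followed by a discussion of its consequences for invertibility and tight-frame design. So there is no in-paper argument to compare against.

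That said, your proof is correct and is essentially the standard one. Computing the frame operator as $\mathbf{S}=\mathbf{U}\,G(\mathbf{\Lambda})\,\mathbf{U}^\top$ via $\sum_a \bm{\delta}_a\bm{\delta}_a^\top=\mathbf{I}$ is exactly equivalent to the Parseval-based computation in Hammond et al., where one writes $\sum_a |\langle\mathbf{x},\psi_{s,a}\rangle|^2=\|\mathbf{U}\,h(s\mathbf{\Lambda})\,\mathbf{U}^\top\mathbf{x}\|_2^2=\sum_k h(s\lambda_k)^2|\hat{\mathbf{x}}_k|^2$ and then sums over scales. Your remark that the continuous extrema over $[0,\lambda_{\text{max}}]$ are valid (universal) but not necessarily sharp bounds, and that $h_0$ is precisely what keeps $A>0$, is accurate and matches the paper's own commentary following the theorem.
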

In theory, invertibility is guaranteed provided that $A$ is different from 0. Nevertheless, in practice, one should strive to design filter shapes (wavelet and low-pass filters) and to choose a set of scales such that $A$ is as close as possible to $B$ in order to deal with well-conditioned inverses. Doing so, we obtain the so-called tight (or snug) frames, i.e. frames such that $A=B$ (or $A\approx B$). An approach is to use classical dyadic decompositions using bandlimited filters such as in Table 1 of~\cite{leonardi_TSP2013}.  Another desirable property of such frames is their discriminatory power: the ability at discerning different signals only by considering their wavelet coefficients. For a filterbank to be discriminative, each filter element needs to take into account information from a similar number of eigenvalues of the Laplacian. The eigenvalues of an arbitrary graph being unevenly spaced on $[0,\lambda_{\text{max}}]$, one needs to compute or estimate the exact density of the spectrum of the graph under consideration~\cite{shuman_TSP2015}.

\subsection{Discrete multi-scale transforms}

A second general way to generate multiscale transforms is via a succession of filtering and decimation operations, as in Fig.~\ref{chapfilt:fig:FBscheme}.  This scheme is usually cascaded as in~\ref{chapfilt:fig:FBscheme_cascade}, and each level of the cascade represents a scale of description of the input signal. Thereby, as soon as decimation enters into the process, we talk about ``discrete'' multi-scale transforms, as the scale parameter can no longer be continuously varied. For details on this particular approach to multiscale transforms in classical signal processing, we refer e.g. to the book by Strang and Nguyen~\cite{strang_book1996}. In the following, we directly consider the graph-based context. Let us first settle notations:
\begin{itemize}
 \item The decimation operator may be generally defined by  partitioning the set of nodes $\mathcal{V}$ into two sets $\mathcal{V}_0$ and $\mathcal{V}_1$. As this subdivision is a partition, we have $\mathcal{V}_0 \cup \mathcal{V}_1 = \mathcal{V}$ and $\mathcal{V}_0 \cap \mathcal{V}_1=\emptyset$. Moreover, let us define $\downarrow_{\mathcal{V}_i}$ the downsampling operator associated to $\mathcal{V}_i$: given any graph signal $\mathbf{x}$, $\mathbf{y}_i=\downarrow_{\mathcal{V}_i}\mathbf{x}$ is the reduction of $\mathbf{x}$ to $\mathcal{V}_i$. We also define the upsampling operator $\uparrow_{\mathcal{V}_i}=\downarrow_{\mathcal{V}_i}^\top$. Given $\mathbf{y}_i$ a signal defined on $\mathcal{V}_i$, $\uparrow_{\mathcal{V}_i}\mathbf{y}_i$ is the zero-padded version of $\mathbf{y}_i$ on the whole graph. The combination of both operators reads: $\uparrow_{\mathcal{V}_i}\downarrow_{\mathcal{V}_i}=\text{diag}(\mathcal{I}_{\mathcal{V}_i})$, where $\mathcal{I}_{\mathcal{V}_i}$ is the indicator function of $\mathcal{V}_i$. Moreover, we define
 \begin{align}
  \mathbf{J} = 2\uparrow_{\mathcal{V}_0}\downarrow_{\mathcal{V}_0} - \; \mathbf{I} = \text{diag}(\mathcal{I}_{\mathcal{V}_0})-\text{diag}(\mathcal{I}_{\mathcal{V}_1}).
 \end{align}
 \item We define $\mathbf{H}_0$ and $\mathbf{H}_1$ respectively a low-pass and a high-pass graph filter, called analysis filters. $\mathbf{G}_0$ and $\mathbf{G}_1$ are also two graph filters, called synthesis filters. All filters are associated to their frequency responses $h_0(\lambda)$, $h_1(\lambda)$, $g_0(\lambda)$ and $g_1(\lambda)$.
\end{itemize}
The signal $\mathbf{y}_0=\downarrow_{\mathcal{V}_0}\mathbf{H}_0\mathbf{x}$ is called the approximation of $\mathbf{x}$, whereas $\mathbf{y}_1=\downarrow_{\mathcal{V}_1}\mathbf{H}_1\mathbf{x}$ is generally understood as the necessary details to recover $\mathbf{x}$ from its approximation.

Given the scheme of Fig.~\ref{chapfilt:fig:FBscheme}, one writes the processed signal $\tilde{\mathbf{x}}$ as:
\begin{align}
 \tilde{\mathbf{x}} &= \left(\mathbf{G}_0\uparrow_{\mathcal{V}_0}\downarrow_{\mathcal{V}_0}\mathbf{H}_0 + \mathbf{G}_1\uparrow_{\mathcal{V}_1}\downarrow_{\mathcal{V}_1}\mathbf{H}_1\right)\mathbf{x}\\
 &= \frac{1}{2}\left(\mathbf{G}_0\mathbf{H}_0+\mathbf{G}_1\mathbf{H}_1\right)\mathbf{x} + \frac{1}{2}\left(\mathbf{G}_0\mathbf{J}\mathbf{H}_0 - \mathbf{G}_1\mathbf{J}\mathbf{H}_1 \right)\mathbf{x}.
\end{align}
When designing such discrete filterbanks, and in order to enable perfect reconstruction ($\forall\mathbf{x}\in\mathbb{R}^N~~\tilde{\mathbf{x}}=\mathbf{x}$), one deals with two main equations linking all four filters and the matrix $\mathbf{J}$:
\begin{align}
 \mathbf{G}_0\mathbf{H}_0+\mathbf{G}_1\mathbf{H}_1 = 2\mathbf{I}\\
 \mathbf{G}_0\mathbf{J}\mathbf{H}_0 - \mathbf{G}_1\mathbf{J}\mathbf{H}_1 = \mathbf{0}
\end{align}
Left-multiplying by $\mathbf{U}^\top$ and right-multiplying by $\mathbf{U}$, one obtains equivalently:
\begin{align}
\label{chapfilt:eq:dfb_bp_1} g_0(\mathbf{\Lambda})h_0(\mathbf{\Lambda})+g_1(\mathbf{\Lambda})h_1(\mathbf{\Lambda}) = 2\mathbf{I}\\
\label{chapfilt:eq:dfb_bp_2} g_0(\mathbf{\Lambda})\mathbf{U}^\top\mathbf{J}\mathbf{U}h_0(\mathbf{\Lambda}) - g_1(\mathbf{\Lambda})\mathbf{U}^\top\mathbf{J}\mathbf{U}h_1(\mathbf{\Lambda}) = \mathbf{0}
\end{align}
Eq.\eqref{chapfilt:eq:dfb_bp_1} is purely spectral, and may be seen as a set of $N$ equations: 
\begin{align}
\label{chapfilt:eq:dfb_bp_1_spec}
 \forall\lambda_i~~~g_0(\lambda_i)h_0(\lambda_i)+g_1(\lambda_i)h_1(\lambda_i)=2.
\end{align}
On the other hand, Eq.~\eqref{chapfilt:eq:dfb_bp_2} is not so simple due to the decimation operation and needs to be investigated in detail. 

\begin{figure}\centering
 \includegraphics[width=0.7\textwidth]{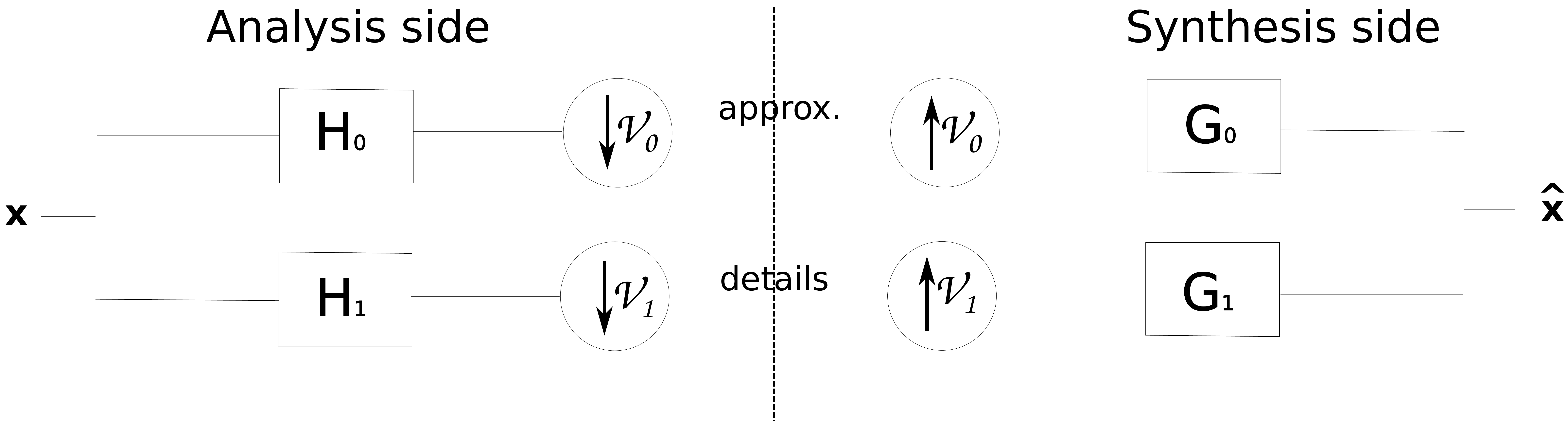}
 \title{}\caption{\label{chapfilt:fig:FBscheme}\textbf{A filterbank seen as a succession of filtering and decimating operators.}}
\end{figure}

\begin{figure}\centering
 \includegraphics[width=\textwidth]{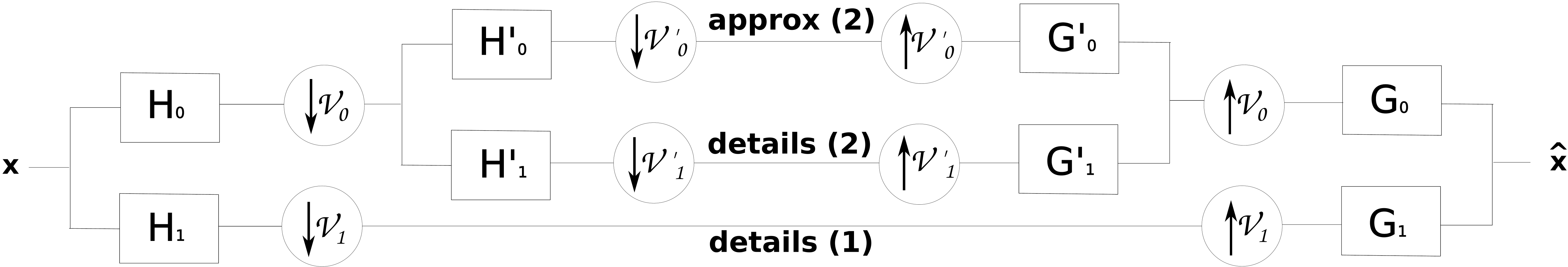}
 \title{}\caption{\label{chapfilt:fig:FBscheme_cascade}\textbf{A cascaded filterbank (here two levels).}}
\end{figure}

In 1D classical signal processing (equivalent to the undirected circle graph), the decimation operator samples one every two nodes. Moreover, given $\mathbf{x}'=
\uparrow_{2}\downarrow_{2}\mathbf{x}$, one classically has the following aliasing phenomenon (Theorem 3.3 of~\cite{strang_book1996}):
\begin{align}
 \mathbb{F}[\mathbf{x}'](\omega)=\frac{1}{2}\left(\mathbb{F}[\mathbf{x}](\omega)+\mathbb{F}[\mathbf{x}](\omega+\pi)\right).
\end{align}
This means that the decimation operations may be explictly described in the Fourier space, which greatly simplifies calculations by enabling to write Eq.~\eqref{chapfilt:eq:dfb_bp_2} as a purely spectral equation as well. Moreover, this also  entails that the combined filtering-decimation operations may be understood as a global multiscale filter (yet with discrete scales), thereby connecting with the previous approach.  
Now, the central question that remains is how to mimick the filtering/decimation approach on general graphs?

In Section~\ref{chapfilt:subsec:struc}, it will be shown that for very well structured graphs such as bipartite graphs or $m$-cyclic graphs, generalizations of the decimation operator may be defined and their effect explicitly described as graph filters. 
Then the case of arbitrary graph is studied in~\ref{chapfilt:subsec:arbitrary} where the ``one every two nodes'' paradigm is not transposable directly;  several approaches to do that will then be reviewed. 

\subsubsection{Filterbanks on bipartite graphs and other strongly structured graphs}
\label{chapfilt:subsec:struc}

\noindent\textbf{Filterbanks on bipartite graphs.} 
Bipartite graphs are graphs where the nodes are partitioned in two sets of nodes $\mathcal{A}$ and $\mathcal{B}$ such that all links of the graph connect a node in $\mathcal{A}$ with a node in $\mathcal{B}$. On bipartite graphs, the ``one-every-two-node'' paradigm has a natural extension: decimation ensembles are set to $\mathcal{V}_0=\mathcal{A}$ and $\mathcal{V}_1=\mathcal{B}$. Leveraging the fact that bipartite graphs' spectra are symmetrical\footnote{i.e., if $\lambda$ is an eigenvalue of $\mathbf{\mathcal{L}}$, then so is $2-\lambda$} around the value 1, Narang and Ortega~\cite{narang_TSP2012} show the bipartite graph spectral folding phenomenon:
\begin{align}
 \forall \lambda \quad\text{\textbf{Pr}}_\lambda \mathbf{J} = \mathbf{J} \text{\textbf{Pr}}_{2-\lambda}, 
\end{align}
(with $\textbf{Pr}_\lambda$ as in eq. \ref{chapfilt:eq:R_as_projectors}).
This means that for any  filter $\mathbf{H}=\mathbf{U}h(\mathbf{\Lambda})\mathbf{U}^\top$ one has:
\begin{align}
 \mathbf{GJ} = \mathbf{U}g(\mathbf{\Lambda})\mathbf{U}^\top\mathbf{J} = \mathbf{J} \mathbf{U}g(2\mathbf{I}-\mathbf{\Lambda})\mathbf{U}^\top.
\end{align}
Eq.~\eqref{chapfilt:eq:dfb_bp_2} therefore boils down to a second set of $N$ purely spectral equations:
\begin{align}
\label{chapfilt:eq:dfb_bp_2_spec}
 \forall \lambda_i\quad g_0(2-\lambda_i)h_0(\lambda_i)-g_1(2-\lambda_i)h_1(\lambda_i)=0.
\end{align}
Eqs.~(\ref{chapfilt:eq:dfb_bp_1_spec}) and~(\ref{chapfilt:eq:dfb_bp_2_spec}) give us $2N$ equations linking the $4N$ parameters of the four filters to ensure perfect reconstruction. The other $2N$ degrees of liberty are free to be used to design filterbanks with other desirable properties of filterbanks, such as (bi-)orthogonality, compact-supportness of the atoms, and of course also to adapt to the specific application for which these filters are designed. \\

\noindent\textbf{Filterbanks on other regular structures.} Extending these ideas, several authors have proposed similar approaches to define filterbanks on other regular structures such as $M$-block cyclic graphs~\cite{teke_extending_2017} or circulant graphs~\cite{ekambaram_circulant_2013,Kotzagiannidis_2016}. In any case, writing decimation operations exactly as graph filters requires regular structures on graphs inducing at least some regularity in the spectrum one may take advantage of. 
All these approaches lead to exact reconstruction procedures.
However, arbitrary graphs do not have such regularities, and other approaches are required.

\subsubsection{Filterbanks on arbitrary graphs}
\label{chapfilt:subsec:arbitrary}

To extend the filterbanks approach to arbitrary graphs, one needs to either generalize the decimation operators, or
to bypass it by working with operator of aggregation of nodes.
We discuss in this last part some solutions that were proposed in the literature. 
A complementary and thoughtful discussion on graph decimation, graph aggregation and graph reconstruction can be found in sections III and IV of ~\cite{shuman_ARXIV2013}.
For that, the two key questions are:
\begin{itemize}
 \item how to generalize the decimation operator on arbitrary graphs? We will see that generalized decimation operators either try to mimick the classical decimation and attempt to sample ``one every two nodes'', or aggregate nodes to form supernodes according in general to some graph cut objective function.
 \item how to build the new coarser-scale graph from the decimated nodes (or aggregated supernodes)? In fact, after each decimation, if one wants to cascade the filterbank, a new coarse-grain graph has to be built in order to define the next level's graph filters. The nodes (resp. supernodes) are set thanks to decimation (resp. aggregation): but how do we link them together? 
\end{itemize}

\noindent\textbf{Graph decimation.} The first work to generalize filterbanks on arbitrary graphs is due to Narang and Ortega~\cite{narang_TSP2012} and consists in decomposing the graph into an edge-disjoint collection of bipartite subgraphs, and then to apply the scheme presented in Section~\ref{chapfilt:subsec:struc} on each of the subgraphs. In this collection, each subgraph has the same node set, and the union of all subgraphs sums to the original graph. 
To perform this decomposition (which is not unique), the same authors propose a coloring-based algorithm, called Harary's decomposition. Sakiyama and Tanaka~\cite{sakiyama_TSP2014} also used this decomposition as one of their design's cornerstone. Unsatisfied by the NP-completeness of the coloring problem (even though heuristics exist), Nguyen and Do~\cite{nguyen_TSP2015} propose another decomposition method based on maximum spanning trees. 

The bipartite paradigm's main advantage comes from the fact that decimation has an explicit formulation in the graph's Fourier space, thereby enabling exact filter designs depending on the given task. In our opinion, when applied to arbitrary graphs, its main drawback comes from the non-unicity of the bipartite subgraphs decomposition, as well as the seamingly arbitrariness of such a decomposition: from a graph signal point-of-view, what is the meaning of a bipartite decomposition? Letting go of this paradigm, and slightly changing the general filterbank design presented in Section~\ref{chapfilt:subsec:struc}, other generalized graph decimations have been proposed. For instance, in~\cite{narang_ICIP2010}, authors propose to separate the graph in two sets $\mathcal{V}_0, \mathcal{V}_1$ according to its max cut, i.e., maximizing $\sum_{i\in\mathcal{V}_0}\sum_{j\in\mathcal{V}_1}\mathbf{W}_{ij}$. In~\cite{shuman_ARXIV2013}, authors suggest similarly to partition the graph into two sets according to the polarity of the last eigenvector (i.e. the eigenvector associated to the highest frequency). In~\cite{avena_intertwining_2017}, authors use an original approach based on random forests to sample nodes, where they have a probabilistic version of ``equally-spaced'' nodes on the graph. \\ 


\noindent\textbf{Graph aggregation.} Another paradigm in graph reduction is graph aggregation, where, instead of \emph{selecting} nodes as in decimation, \emph{aggregates} entire regions of the graph in ``supernodes''. In general, these methods are based on first clustering the nodes in a partition $\mathcal{P}=\{\mathcal{V}_1, \mathcal{V}_2, \ldots, \mathcal{V}_J\}$. Each of these subsets will define a supernode of the coarse graph: this reduced graph thus contains $J$ supernodes. Once a rule is chosen to connect these supernodes together (the object of the next paragraph), the coarse graph is fully defined, and the method may be iterated to obtain a multiresolution of the initial graph's structure. All these methods differ mainly on the choice of the algorithm or the objective function to find this partition. For instance, one may find methods based on random walks~\cite{lafon_diffusion_2006}, on short time diffusion distances~\cite{livne_lean_2012}, on the algebraic distance~\cite{ron_relaxation-based_2011}, etc. Other multiresolution approaches may also be found in~\cite{dhillon_weighted_2007, karypis_fast_1998,murtagh_algorithms_2012}. One may also find many approaches from the network science community in the connex field of community detection~\cite{newman_modularity_2006, fortunato_community_2010}, and in particular multiscale community detection~\cite{reichardt_statistical_2006, schaub_markov_2012, tremblay_graph_2014-1}. All these methods are concerned about providing a multiresolution description of the graph structure, but do not consider any graph signal. Recently, graph signal processing filterbanks have been proposed to define a multiscale representation of graph signals based on these approaches. In~\cite{tremblay_subgraph-based_2016}, we proposed such an approach where we define a generalized Haar filterbank: instead of averaging and differentiating over pairs of nodes as in the classical Haar filterbank, we average and differentiate over the subsets $\mathcal{V}_j$ of a partition in subgraphs. In~\cite{irion_applied_2015}, authors propose a similar approach and define other types of filterbanks such as the hierarchical graph Laplacian eigentransform. Another similar Haar filterbank may be found in~\cite{gavish_multiscale_2010}. All these methods are independent of exactly which aggregation algorithm one chooses to find the partition. Let us also cite methods that provide multiresolution approaches without necessarily defining low-pass and high-pass filters explictly in the graph Fourier domain~\cite{lee_treelets_2008,gavish_multiscale_2010,mishne_data_driven_2017}. Finally, let us point out that these methods may be extended to graph partitions $\mathcal{P}$ containing overlaps, as in~\cite{szlam_diffusion-driven_2005}. \\

\noindent\textbf{Coarse graph reconstruction.} Once one decided how to decimate nodes, or how to partition them in supernodes, how should one connect these nodes together in order to form a consistent reduced graph?  
In order to satisfy constraints such as interlacement (coarsely speaking, that the spectrum of the reduced graph is representative of the spectrum of the initial graph) and sparsity, Shuman et al.~\cite{shuman_ARXIV2013} propose a Kron reduction followed by a sparsification step. The Laplacian of the reduced graph is thus defined as the Schur complement of the initial graph's Laplacian relative to the unsampled nodes. The sparsification step is performed via a sparsifier based on effective resistances by Spielman and Srivastava~\cite{spielman_graph_2011} that approximately preserves the spectrum. In~\cite{avena_intertwining_2017}, authors propose another approach to the intuitive idea that the initial and coarse graph should have similar spectral properties: they look for the coarse Laplacian matrix that satisfies an intertwining relation. In~\cite{nguyen_TSP2015}, authors connect nodes according to the set of nested bipartite graphs obtained by their maximum spanning tree algorithm. In aggregation methods~\cite{tremblay_subgraph-based_2016, irion_applied_2015}, there is an inherent natural way of connecting supernodes: the weight of the link between supernodes $i$ and $j$ is equal to the sum of the weights of the links connecting nodes in $\mathcal{V}_i$ to nodes in $\mathcal{V}_j$.   
\\

\noindent \textbf{Illustrations.} We show in Fig.~\ref{chapfilt:fig:FBexample} an example of multiresolution analysis of a given graph signal on the Minnesota traffic graph, using a method of successive graph aggregation to compute the details and approximations at different scales. The specific method for this illustration is the one detailed in~\cite{tremblay_subgraph-based_2016}.

\begin{figure}\centering
 \includegraphics[width=\textwidth]{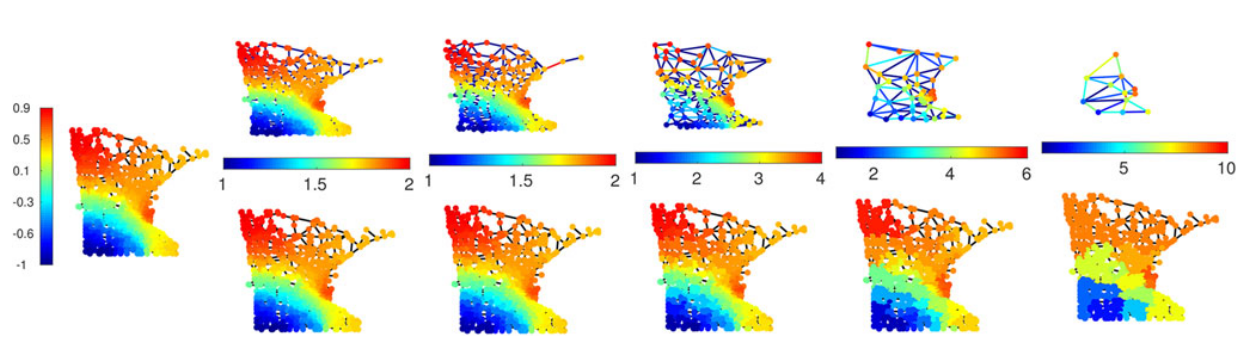}
 \caption{\label{chapfilt:fig:FBexample} \textbf{An example of multiresolution analyis of a graph signal (from~\cite{tremblay_subgraph-based_2016}).} Left: original smooth graph signal (sum of the five lowest Fourier modes normalized by its maximum absolute value) defined on the Minnesota 
traffic graph. The vertical colorbar of this figure is valid for all graph signals represented on this figure. Top row: successive approximations of the graph signal. The horizontal colorbar on the bottom of each figure corresponds to the weights of the links of the corresponding coarsened graph. 
Figures who do not have a bottom horizontal colorbar represent binary graphs. Lower row: for each of the successive approximations, we represent the upsampled reconstructed graph signal obtained from the corresponding approximations.}
\end{figure}

\section*{Conclusion}

The purpose of this chapter was to introduce the reader to a basic understanding of what is a Graph Fourier Transform. We stressed how it can be generally introduced for undirected or directed graphs, by choosing a reference operator whose spectral domain will define the frequency domain for graph signals. Then, we led the reader to the more elaborated designs of graph filters and multiscale transforms on graphs.
The first section is voluntarily introductory, and almost self-contained. 
Indeed, we endeavoured to delineate a general guideline that shows, in an original manner, that there is no major discrepancy between choosing a Laplacian, an Adjacency, or a random walk operator\ldots as long as one chooses accordingly, the appropriate notion of frequency to analyse graph signals.  
Then, after a proper definition of graph filters, our objective was to review the literature and to propose guidelines and pointers to the relevant results on graph filters and related multiscale transforms. This last part being written as a review, we beg for reader's indulgence, as many details are skipped, and some works are only reported here in a sketchy manner.

	\section*{Acknowledgements}
	\ANR. 
\else
	\maketitle
	
	\tableofcontents

	\section*{Acknowledgements}
	\ANR
\fi

\ifels
	\bibliographystyle{elsarticle-num}
	\section*{References}
\else
	\bibliographystyle{IEEEtran}
\fi
\bibliography{biblio_GraphFilters}

\end{document}